\documentclass{article}

\PassOptionsToPackage{numbers, compress}{natbib}

\usepackage[preprint]{neurips_2024}

\usepackage[utf8]{inputenc} %
\usepackage[T1]{fontenc}    %
\usepackage{hyperref}       %
\usepackage{url}            %
\usepackage{booktabs}       %
\usepackage{amsfonts}       %
\usepackage{nicefrac}       %
\usepackage{microtype}      %
\usepackage{xcolor}         %

\usepackage{amsmath}
\usepackage{amssymb}
\usepackage{mathtools}
\usepackage{amsthm}

\usepackage[capitalize,noabbrev]{cleveref}

\usepackage{thm-restate}

\theoremstyle{plain}
\newtheorem{theorem}{Theorem}[section]

\newtheorem{lemma}[theorem]{Lemma}

\theoremstyle{definition}
\newtheorem{definition}[theorem]{Definition}

\theoremstyle{remark}

\usepackage[textsize=tiny]{todonotes}

\usepackage{algorithm}
\usepackage[noend]{algpseudocode}
\usepackage{wrapfig}

\usepackage{verbatim}

\usepackage{graphicx}
\usepackage{subcaption}
\usepackage{enumitem}

\usepackage{placeins} %

\DeclareMathOperator*{\argmin}{argmin}

\usepackage[textsize=tiny]{todonotes}

\newcommand{\eps}{\varepsilon}

\title{A Bi-metric Framework for Fast Similarity Search}

\author{%
  Haike Xu \\
  MIT \\
  \texttt{haikexu@mit.edu} \\
  \And
  Sandeep Silwal \\
  MIT \\
  \texttt{silwal@mit.edu} \\
  \And
  Piotr Indyk \\
  MIT \\
  \texttt{indyk@mit.edu} \\
}

\begin{document}

\maketitle

\begin{abstract}
We propose a new ``bi-metric'' framework for designing nearest neighbor data structures. Our framework assumes two dissimilarity functions: a {\em ground-truth} metric that is accurate but expensive to compute, and a {\em proxy} metric that is cheaper but less accurate. In both theory and practice, we show how to construct data structures using only the proxy metric such that the query procedure achieves the accuracy of the expensive metric, while only using a limited number of calls to both metrics.  Our theoretical results instantiate this framework for two popular nearest neighbor search algorithms: DiskANN and Cover Tree. In both cases we show that, as long as the proxy metric used to construct the data structure approximates the ground-truth metric up to a bounded factor, our data structure achieves arbitrarily good approximation guarantees with respect to the ground-truth metric. On the empirical side, we apply the framework to the text retrieval problem with two dissimilarity functions evaluated by ML models with vastly different computational costs. We observe that for almost all data sets in the MTEB benchmark, our approach achieves a considerably better accuracy-efficiency tradeoff than the alternatives, such as re-ranking\footnote{Our code is publicly available at \url{https://github.com/xuhaike/Bi-metric-search}}.
\end{abstract}

\section{Introduction}

Similarity search is a versatile and popular approach to data retrieval. It assumes that the data items of interest (text passages, images, etc.) are equipped with a distance function, which for any pair of items estimates their similarity or dissimilarity\footnote{To simplify the presentation, throughout this paper we assume a dissimilarity function.}. Then, given a “query” item, the goal is to return the data item that is most similar to the query.  From the algorithmic perspective, this approach is formalized as the nearest neighbor search (NN) problem: given a set of $n$ points $P$ in a metric space $(X,D)$, build a data structure that, given any query point $q \in X$, returns $p \in P$ that minimizes  $D(p,q)$ . 
In many cases, the items are represented by high-dimensional feature vectors and $D$ is induced by the Euclidean distance between the vectors. In other cases, $D(p,q)$ is computed by a dedicated procedure given $p$ and $q$ (e.g., by a cross-encoder).

Over the last decade, mapping the data items to feature vectors, or estimation of similarity between pairs of data items, is often done using machine learning models. (In the context of text retrieval, the first task is achieved by constructing bi-encoders \cite{karpukhin-etal-2020-dense, neelakantan2022text, gao2021simcse, wang2024text}, while the second task uses cross-encoders \cite{gao2021rethink, nogueira2020document, nogueira2020passage}). This creates efficiency bottlenecks, as high-accuracy machine learning models are often large and costly to use, while cheaper models do not achieve the state-of-the-art accuracy. Furthermore, the high-accuracy models are often proprietary, accessible only through a limited interface, and frequently updated without notice. This motivates the study of “the best of both worlds” solutions which utilize many types of models to achieve favorable tradeoffs between efficiency, accuracy and flexibility.

One popular method for combining multiple models is based on “re-ranking” \cite{liu2009learning}. It assumes two models: one model evaluating the metric $D$, which has high accuracy but is less efficient; and another model computing a ``proxy’’ metric $d$, which is cheap but less accurate. The algorithm uses the second model ($d$) to retrieve a large (say, $k=1000$) number of data items with the highest similarity to the query, and then uses the first model ($D$) to select the most similar items.  The hyperparameter $k$ controls the tradeoff between the accuracy and efficiency.  To improve the efficiency further, the retrieval of the top-$k$ items is typically accomplished using approximate nearest neighbor data structures. Such data structures are constructed for the proxy metric $d$, so they remain stable even if the high-accuracy metric $D$ undergoes frequent updates. 

Despite its popularity, the re-ranking approach suffers from several issues:

\begin{enumerate}[leftmargin=*]
\item The overall accuracy is limited by the accuracy of the cheaper model. To illustrate this phenomenon, suppose that $D$ defines the “true” distance, while $d$ only provides a “$C$-approximate” distance, i.e., that the values of $d$ and $D$ for the same pairs of items differ by at most a factor of $C>1$. Then the re-ranking approach can only guarantee that the top reported item is a $C$-approximation, namely that its distance to the query is at most $C$ times the distance from the query to its true nearest neighbor according to $D$. This occurs because the first stage of the process, using the proxy $d$, might not retain the most relevant items.
\item Since the set of the top-$k$ items with respect to the more accurate model depends on the query, one needs to perform at least a linear scan over all $k$ data items retrieved using the proxy metric $d$. This computational cost can be reduced by decreasing $k$, but at the price of reducing the accuracy.
\end{enumerate}

\paragraph{Our results} In this paper we show that, in both theory and practice, it is possible to combine cheap and expensive models to achieve approximate nearest neighbor data structures that inherit the accuracy of expensive models while significantly reducing the overall computational cost. Specifically, we propose a {\em bi-metric} framework for designing nearest neighbor data structures with the following properties:

\begin{itemize}[leftmargin=*]
\item The algorithm for creating the data structure uses only the proxy metric $d$, making it efficient to construct,
\item The algorithm for answering the nearest neighbor query leverages both models, but performs only a sub-linear number of evaluations of $d$ and $D$, 
\item The data structure achieves the accuracy of the expensive model.
\end{itemize}

Our framework is quite general, and is applicable to any approximate nearest neighbor data structure that works for general metrics. Our {\em theoretical} study analyzes this approach when applied to two popular algorithms: DiskANN~\cite{jayaram2019diskann} and Cover Tree~\cite{beygelzimer2006cover}. Specifically, we show the following theorem statement. We use $\lambda_d$ to refer to the doubling dimension with respect to metric $d$ (a measure of intrinsic dimensionality, see Definition \ref{def:doubling_dim}).

\begin{theorem}[Summary, see Theorems \ref{thm:diskann_main} and \ref{thm:ct_main}]
Given a dataset $X$ of $n$ points, $\mathtt{Alg} \in \{\text{DiskANN}, \text{Cover Tree}\}$, and a fixed metric $d$, let $S_{\mathtt{Alg}}(n, \eps, \lambda_d)$ and $Q_{\mathtt{Alg}}(\eps, \lambda_d)$ denote the space and query complexity respectively of the standard datastructure for $\mathtt{Alg}$ which reports a $1+\eps$ nearest neighbor in $X$ for any query (all for a fixed metric $d$).

Consider two metrics $d$ and $D$ satisfying Equation \ref{eq:relationship_metric}. Then for any $\mathtt{Alg} \in \{\text{DiskANN}, \text{Cover Tree}\}$, we can build a corresponding datastructure $\mathcal{D}_{\mathtt{Alg}}$ on $X$ with the following properties:
\begin{enumerate}
    \item When construction $\mathcal{D}_{\mathtt{Alg}}$, we only access metric $d$,
    \item The space used by $\mathcal{D}_{\mathtt{Alg}}$ can be bounded by $\tilde{O}(S_{\mathtt{Alg}}(n, \eps/C, \lambda_d))$\footnote{$\tilde{O}$ hides logarithm dependencies in the aspect ratio.},
    \item Given any query $q$,  $\mathcal{D}_{\mathtt{Alg}}$ invokes $D$ at most $\tilde{O}(Q_{\mathtt{Alg}}(\eps/C, \lambda_d))$ times,
    \item $\mathcal{D}_{\mathtt{Alg}}$ returns a $1+\eps$ approximate nearest neighbor of $q$ in $X$ under metric $D$.
\end{enumerate}
\end{theorem}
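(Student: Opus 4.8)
The plan is to leave the \emph{construction} of each data structure essentially unchanged --- run the base algorithm on the proxy $d$, with its internal accuracy parameter set to $\eps/C$ --- and to modify only the \emph{query} procedure so that every distance comparison is made under $D$ and every pruning/progress threshold is inflated by the factor $C$ of Equation~\ref{eq:relationship_metric} (which, after rescaling, we take to be $d(x,y)\le D(x,y)\le C\,d(x,y)$ for all $x,y$). Because the construction touches only $d$, property~(1) is immediate; and because the object built is the one the base algorithm would build for parameter $\eps/C$ under $d$ (for Cover Tree there is no $\eps$ in construction at all), its size is $S_{\mathtt{Alg}}(n,\eps/C,\lambda_d)$ up to the aspect-ratio logarithms hidden by $\tilde O$, giving property~(2). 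All the work is in re-analyzing the query.

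For Cover Tree, recall that every point in the subtree rooted at a level-$i$ node $v$ is within $d$-distance $\sum_{j\le i}2^j\le 2^{i+1}$ of $v$, hence within $D$-distance $C2^{i+1}$ of $v$. I would run the usual top-down traversal but retain, at level $i$, a child $c$ of the candidate set $Q_i$ iff $D(q,c)\le \min_{v\in Q_i}D(q,v)+O(C2^{i+1})$. A level-by-level induction then shows the ancestor at level $i$ of the true $D$-nearest neighbor $p^{\star}$ is never pruned --- the standard argument verbatim, with $2^{i+1}$ replaced by $C2^{i+1}$ in each triangle inequality. Descending until $C2^{i+1}\lesssim \eps\,\widehat D$, where $\widehat D$ is the best $D$-distance seen (a geometric-series argument bounds the number of levels by $\log(\text{aspect ratio})$), and returning the best point seen yields a $(1+\eps)$-approximation under $D$. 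Each $|Q_i|$ is controlled by a packing argument: the surviving members lie in a $D$-ball, hence a $d$-ball, of radius $O(D(q,p^{\star}))$ about $q$ and are pairwise $d$-separated by $2^i\gtrsim \eps D(q,p^{\star})/C$, so there are at most $(C/\eps)^{O(\lambda_d)}$ of them --- matching $Q_{\mathrm{CT}}(\eps/C,\lambda_d)$ --- which bounds the number of $D$-evaluations and gives (3)--(4).

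For DiskANN, I would use the slow-preprocessing graph built under $d$, with its slack parameter $\alpha$ set (to $\Theta(C/\eps)$) so that a $d$-guided greedy search would return a $(1+\eps/C)$-approximation; this is the regime in which the graph-theoretic reachability property underlying DiskANN's guarantee holds with respect to $d$. The query runs greedy search but always steps to the out-neighbor minimizing $D(q,\cdot)$. The crux is a ``progress lemma under distortion'': if the current vertex $p$ is not already a $(1+\eps)$-approximate $D$-nearest neighbor, the $d$-reachability property yields an out-edge $p\to p''$ with $d(p'',p^{\star}) < d(p,p^{\star})/\alpha$, and the sandwich bound converts this to $D(q,p'')<D(q,p^{\star})+\tfrac{C}{\alpha}D(p,p^{\star})$; the step actually taken is at least as good, so $D(q,\cdot)-D(q,p^{\star})$ contracts by a factor $\approx C/\alpha<1$ down to an additive $O((C/\alpha)D(q,p^{\star}))\le \eps D(q,p^{\star})$ neighborhood. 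A geometric (aspect-ratio) bound on the number of steps, times the graph's degree $(C/\eps)^{O(\lambda_d)}$, bounds the number of $D$-evaluations by $\tilde O(Q_{\mathrm{DiskANN}}(\eps/C,\lambda_d))$, giving (3)--(4).

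I expect the DiskANN case to be the main obstacle: the textbook analysis of greedy search relies on the \emph{same} metric defining the edges and guiding the walk, so the progress lemma must be rederived carrying the $d$-vs-$D$ discrepancy through every triangle inequality, and one must verify that inflating $\alpha$ to $\Theta(C/\eps)$ suffices without further inflating the degree or step count past the claimed $\tilde O$. The Cover Tree case is more mechanical; its only delicate points are the exact constant in the inflated pruning threshold and the termination level, whose dependence on the aspect ratio is absorbed into $\tilde O$.
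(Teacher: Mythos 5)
Your proposal is essentially correct but takes a different route from the paper on both instantiations, and the difference is worth noting. For Cover Tree, the paper \emph{tightens the cover scales at construction time} (it builds $\mathcal{C}_i$ as a $2^i/C$-cover under $d$, by setting $T=C$ in Algorithm~\ref{alg:cover_tree_create}), so that Lemma~\ref{lem:descend} gives a descendant bound of exactly $2^i$ under $D$ and the query procedure of Algorithm~\ref{alg:cover_tree_search} can run with the \emph{unmodified} thresholds $D(q,Q)+2^i$ and $2^i(1+1/\eps)$. You instead keep the standard $2^i$-cover and \emph{inflate the query thresholds} by $C$; these are dual parametrizations and both give $O(n)$ space and the claimed $\tilde O$ query bound, though your packing argument should be split into the two regimes $D(q,p^\star)\lesssim 2^i$ and $D(q,p^\star)\gtrsim 2^i$ (as in Theorem~\ref{thm:ct_query}) to recover the sharper $C^{O(\lambda_d)}\log\Delta_d+(C/\eps)^{O(\lambda_d)}$ form rather than the cruder $(C/\eps)^{O(\lambda_d)}\log\Delta_d$. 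For DiskANN you flag the ``progress lemma under distortion'' as the main obstacle and propose re-deriving it by threading the $C$-sandwich through each triangle inequality; this works, but the paper avoids the re-derivation entirely via Lemma~\ref{lm:reachable}: if $G$ is $\alpha$-shortcut reachable under $d$ (with $\alpha>C$), it is automatically $(\alpha/C)$-shortcut reachable under $D$, so one can invoke the existing analysis of Theorem~\ref{thm:greedy_analysis} completely black-box with metric $D$ and slack $\alpha/C$. That single observation dissolves the obstacle you identify, keeps the degree bound $(C/\eps)^{O(\lambda_d)}\log\Delta_d$ and step count $O(\log(C\Delta_d))$ without a custom contraction argument, and is the key idea your write-up is missing.
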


We emphasize that the data structures we study achieve an arbitrarily small approximation factor of $1+\eps$, even though $d$ approximates $D$ only up to some constant factor $C$. The proof of the theorem crucially uses the properties of the underlying data structures. It is an interesting open direction to determine if our bi-metric framework can be theoretically instantiated for other popular nearest neighbor algorithms, such as those based on locality sensitive hashing.

To demonstrate the {\em practical} applicability of the bi-metric  framework, we apply it to the text retrieval problem. Here, the data items are text passages, and the goal is to retrieve a passage from a large collection that is most relevant to a query passage. We instantiated our framework with the DiskANN algorithm, using a high-quality ``SFR-Embedding-Mistral'' model \cite{SFRAIResearch2024} to define $D$, and a lower-quality ``bge-micro-v2'' model \cite{bgemicro} to define $d$. Both metrics $d(p,q)$ and $D(p,q)$ are induced by the Euclidean distance between the embeddings of $p$ and $q$ using the respective models. The sizes of the two models differ by 3 orders of magnitude, making $D$ much more expensive to evaluate than $d$. We evaluated the retrieval quality of our approach on a benchmark collection of 15 MTEB retrieval data sets \cite{thakur2021beir}, comparing it to the re-ranking approach, which retrieves the closest data items to the query with respect to $d$ and re-ranks using $D$. We observe that for almost all data sets, our approach achieves a considerably better accuracy-efficiency tradeoff than re-ranking. In particular, for several data sets, such at HotpotQA \cite{yang-etal-2018-hotpotqa}, our approach achieves state-of-the-art retrieval accuracy using up to $\textbf{4x}$ fewer evaluations of the expensive model.

\subsection{Related Work}

\paragraph{Graph-based algorithms for similarity search}
The algorithms studied in this paper rely on graph-based data structures for (approximate) nearest neighbor search. Such data structures work for general metrics, which, during the pre-processing, are approximated by carefully constructed graphs.  Given the graph and the query point, the query answering procedure greedily searches the graph to identify the nearest neighbors. Graph-based algorithms have been extensively studied both in theory~\cite{KrauthgamerL04, beygelzimer2006cover} and in practice~\cite{fu2019fast, jayaram2019diskann,malkov2018efficient,harwood2016fanng}. See~\cite{clarkson2006nearest,wang2021comprehensive} for an overview of these lines of research. 

\paragraph{Data analysis with proxy metrics} We are aware of the following prior papers that design and analyze algorithms that use proxy and ground-truth metrics~\cite{GuhaGRS15, moseley2021hierarchical, silwal2023kwikbucks, batenietal23}. These papers focus on clustering, while in this work we focus on the nearest neighbor search. The paper~\cite{moseley2021hierarchical} is closest to our work, as it uses approximate nearest neighbor as a subroutine when computing the clustering. However, their algorithm only achieves the (lower) accuracy of the cheaper model, while our algorithms retains the (higher) accuracy of the expensive one.

More broadly, the idea of using cheap but noisy filters to identify candidate items that are then re-ranked using expensive and accurate methods is a popular approach in many applications, including recommendation systems~\cite{liu2022neural} and computer vision~\cite{zhong2017re}.

\section{Preliminaries}

\paragraph{Problem formulation:}

We assume that we are give two metrics over the pairs of points from $X$:
\begin{itemize}[leftmargin=*]
\item The {\em ground truth} metric $D$, which for any pair $p,q \in X$ returns the ``true'' dissimilarity between $p$ and $q$, and
\item The {\em proxy} metric $d$, which provides a cheap approximation to the ground truth metric.
\end{itemize}
Throughout the paper, we think of $D$ as `expensive' to evaluate, and $d$ as the cheaper, but noisy, proxy.

We first consider the {\em exact} nearest neighbor search. Here, the goal is to build an index structure that, given a query $q\in X$, returns $p^*\in P$ such that $p^* = \mbox{arg min}_{p \in P} D(q,p)$. 
The algorithm for constructing the data structure has access to the proxy metric $d$, but not the ground truth metric $D$. The algorithm for answering a query $q$ has access to both metrics. The complexity of the query-answering procedure is measured by counting the number of evaluations of the ground truth metric $D$.

As described in the introduction, the above formulation is motivated by the following considerations:
\begin{itemize}[leftmargin=*]
\item Evaluating ground truth metric $D$ is typically very expensive. 
Therefore, our cost model for the query answering procedure only accounts for the number of such evaluations. 
\item The metric $D$ can change in unpredictable ways after the index is constructed, e.g., due to model update. Hence we do not assume that the algorithm can access the (current version of) $D$ during the index construction phase.
\end{itemize}

Our formulation can be naturally extended to more general settings, such as:
\begin{itemize}[leftmargin=*]
\item $(1+\eps)$-approximate nearest neighbor search, where the goal is to find any $p^*$ such that \[ D(q,p^*) \le (1+\eps) \min_{p \in P} D(q,p).\]
\item $k$-nearest neighbor search, where the goal is to find the set of $k$ nearest neighbors of $q$ in $P$ with respect to $D$. If the algorithm returns a set $S'$ of $k$ points  that is different than the set $S$ of true $k$ nearest neighbor, the quality of the answer is measured by computing the Recall rate or NDCG score \cite{ndcg}. 

\end{itemize}

\paragraph{Assumptions about metrics:} Clearly, if the metrics $d$ and $D$ are not related to each other, the data structure constructed using $d$ alone does not help with the query retrieval. Therefore, we assume that the two metrics are related through the following definition.

\begin{definition}\label{def:distortion}
Given a set of $n$ points $P$ in a metric space $X$ and its distance function $D$, we say the distance function $d$ is a $C$-approximation of $D$ if for all $x,y \in X$, 
\begin{equation}\label{eq:relationship_metric}
    d(x,y) \le D(x,y) \le C \cdot d(x,y).
\end{equation}
\end{definition}
For a fixed metric $d$ and any point $p\in X$, radius $r>0$, we use $B(p,r)$ to denote the ball with radius $r$ centered at $p$, i.e. $B(p,r)=\{q\in X: d(p,q)\le r\}$. In our paper, the notion of \emph{doubling-dimension} is central. It is a measure of intrinsic dimensionality of datasets which is popular in analyzing high dimensional datasets, especially in the context of nearest neighbor search algorithms \cite{GuptaKL03, KrauthgamerL04, beygelzimer2006cover, IndykN07, Har-PeledK13, NarayananSIZ21, NEURIPS2023_worstcase}.

\begin{definition}[Doubling Dimension]\label{def:doubling_dim}
 We say a point set $X$ has doubling dimension $\lambda_d$ with respect to metric $d$ if for any $p\in X$, and radius $r>0$, $X\cap B(p,2r)$ can be covered by at most $2^{\lambda_d}$ balls with radius $r$.   
\end{definition}
Finally, for a metric $d$, $\Delta_d$ is the aspect ratio of the input set $X$, i.e., the ratio between the diameter and the distance of the closest pair.

\section{Theoretical analysis}\label{sec:theory}
We instantiate our \emph{bi-metric} framework for two popular nearest neighbor search algorithms: DiskANN and Cover Tree. The goal of our bi-metric framework is to first create a data structure using the proxy (cheap) metric $d$, but solve nearest neighbor to $1+\eps$ accuracy for the expensive metric $D$. Furthermore, the query step should invoke the metric $D$ judiciously, as the number of calls to $D$ is the measure of efficiency. Our theoretical query answering algorithms do not use calls to $d$ at all.

We note that, if we treat the proxy data structure as a {\em black box}, we can only guarantee that it returns a $C$-approximate nearest neighbor with respect to $D$. Our theoretical analysis overcomes this, and shows that calling $D$ a sublinear number of times in the query phase (for DiskANN and Cover Tree) allows us to obtain {\em arbitrarily accurate} neighbors for $D$.

At a high level, the unifying theme of the algorithms that we analyze (DiskANN and Cover Tree) is that they both crucially use the concept of a \emph{net}: given a parameter $r$, a $r$-net is a small subset of the dataset guaranteeing that every data point is within distance $r$ to the subset in the net. Both algorithms (implicitly or explicitly), construct nets of various scales $r$ which help route queries to their nearest neighbors in the dataset. The key insight is that a net of scale $r$ for metric $d$ is also a net under metric $D$, but with the larger scale $Cr$. Thus, if we construct smaller nets for metric $d$, they can also function as nets for the expensive metric $D$ (which we don't access during our data structure construction). Care must be taken to formalize this intuition and we present the details below. 

We remark that the intuition we gave clearly does not generalize for nearest neighbor algorithms which are fundamentally different, such as locality sensitive hashing. For such algorithms, it is not clear if any semblance of a bi-metric framework is algorithmically possible, and this is an interesting open direction. In the main body, we present the (simpler) analysis of DiskANN and defer the analysis of Cover Tree to Appendix \ref{sec:covertree}.

\subsection{DiskANN}\label{sec:diskann_analysis}
First, some helpful background on the algorithm is given.

\subsubsection{Preliminaries for DiskANN}\label{diskann_prelim}
In Section \ref{diskann_prelim}, we only deal with a single metric $d$.
We first need the notion of an $\alpha$-shortcut reachability graph. Intuitively, it is an unweighted graph $G$ where the vertices correspond to points of a dataset $X$ such that nearby points (geometrically) are close in graph distance. 

\begin{definition}[$\alpha$-shortcut reachability \cite{NEURIPS2023_worstcase}]\label{def:alpha_graph}
Let $\alpha \ge 1$. We say a graph $G=(X,E)$ is $\alpha$-shortcut reachable from a vertex $p$ under a given metric $d$ if for any other vertex $q$, either $(p,q)\in E$, or there exists $p'$ s.t. $(p,p')\in E$ and $d(p',q)\cdot\alpha\le d(p,q)$. We say a graph $G$ is $\alpha$-shortcut reachable under metric $d$ if $G$ is $\alpha$-shortcut reachable from any vertex $v\in X$.
\end{definition}

The main analysis of \cite{NEURIPS2023_worstcase} shows that (the `slow preprocessing version' of ) DiskANN outputs an $\alpha$-shortcut reachability graph.

\begin{theorem}[\cite{NEURIPS2023_worstcase}]\label{thm:alpha_graph} Given a dataset $X$, $\alpha \ge 1$, and fixed metric $d$ the slow preprocessing DiskANN algorithm (Algorithm $4$ in \cite{NEURIPS2023_worstcase}) outputs a $\alpha$-shortcut reachibility graph $G$ on $X$ as defined in Definition \ref{def:alpha_graph} (under metric $d$). The space complexity of $G$ is $n \cdot \alpha^{O(\lambda_d)} \log(\Delta_d)$.
\end{theorem}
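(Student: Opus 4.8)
The plan is to unpack what the ``slow preprocessing'' DiskANN construction actually does and observe that both conclusions come out of the single pruning subroutine it runs at each vertex. In this version, the out-neighborhood $N_{\mathrm{out}}(p)$ of each $p \in X$ is produced independently by one \emph{robust pruning} pass: initialize a candidate pool $V = X \setminus \{p\}$; while $V \ne \emptyset$, pick the point $p^\star \in V$ closest to $p$ under $d$, add $p^\star$ to $N_{\mathrm{out}}(p)$, and delete from $V$ every $v$ with $\alpha \cdot d(p^\star, v) \le d(p, v)$ (which removes $p^\star$ itself as well). The one structural fact I would extract is: for every $q \ne p$, either $q \in N_{\mathrm{out}}(p)$, or $q$ was deleted by some $p^\star$ that had already been added to $N_{\mathrm{out}}(p)$, and that $p^\star$ satisfies $\alpha \cdot d(p^\star, q) \le d(p,q)$.

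That fact is precisely Definition \ref{def:alpha_graph} at the vertex $p$: taking $p' = p^\star$ witnesses $\alpha$-shortcut reachability from $p$, and since this holds verbatim for every $p \in X$, the output graph $G$ is $\alpha$-shortcut reachable under $d$. So the first claim needs essentially no work beyond reading off the pruning rule.

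For the space bound I would run a scale-by-scale packing argument on the out-degree. Fix $p$ and set $A_j = \{\, r \in N_{\mathrm{out}}(p) : 2^j \le d(p,r) < 2^{j+1} \,\}$. If $r, s \in A_j$ and $r$ was added before $s$, then $s$ was still alive when $r$ triggered its deletions (an out-neighbor is never deleted), so $s$ failed the test, i.e. $\alpha \cdot d(r,s) > d(p,s) \ge 2^j$; hence the points of $A_j$ are pairwise more than $2^j/\alpha$ apart while all lying in $B(p, 2^{j+1})$. Iterating the doubling condition of Definition \ref{def:doubling_dim} to cover $B(p,2^{j+1})$ by balls of radius $2^j/(2\alpha)$, each such ball contains at most one point of $A_j$, which gives $|A_j| \le (O(\alpha))^{\lambda_d} = \alpha^{O(\lambda_d)}$. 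Since every nonzero pairwise distance in $X$ lies in an interval of ratio $\Delta_d$, only $O(\log \Delta_d)$ indices $j$ are nonempty, so $|N_{\mathrm{out}}(p)| \le \alpha^{O(\lambda_d)} \log \Delta_d$; summing over the $n$ vertices bounds the number of stored edges, and hence the space, by $n \cdot \alpha^{O(\lambda_d)} \log \Delta_d$.

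I do not expect a genuine obstacle, since this restates the analysis of \cite{NEURIPS2023_worstcase}; the only care needed is bookkeeping — checking that the deletion test used by Algorithm 4 there is exactly $\alpha \cdot d(p^\star, v) \le d(p,v)$ so that its negation yields the separation inequality above, and tracking the constant absorbed into the $O(\lambda_d)$ exponent in the packing step so the stated $\alpha^{O(\lambda_d)} \log \Delta_d$ bound is honest.
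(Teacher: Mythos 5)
The paper does not give its own proof of this statement—it is imported verbatim as a citation of \cite{NEURIPS2023_worstcase}. Your reconstruction of the argument from that reference is correct: the robust-pruning deletion rule $\alpha\cdot d(p^\star,v)\le d(p,v)$ directly certifies $\alpha$-shortcut reachability (the last pruner that removed $q$ is the witness $p'$), and the out-degree bound follows from the dyadic annulus decomposition plus a standard packing argument, since out-neighbors in the same annulus are pairwise $> 2^j/\alpha$-separated while lying in a ball of radius $2^{j+1}$, yielding $(4\alpha)^{\lambda_d}=\alpha^{O(\lambda_d)}$ per annulus and $O(\log\Delta_d)$ nonempty annuli. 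This matches the analysis in the cited source, so there is nothing to contrast against the (nonexistent) in-paper proof; the bookkeeping you flag—the exact form of the pruning inequality and the constants absorbed into the exponent—is indeed the only thing one must verify against Algorithm 4 of \cite{NEURIPS2023_worstcase}.
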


Given an $\alpha$-reachability graph on a dataset $X$ and a query point $q$, \cite{NEURIPS2023_worstcase} additionally show that the greedy search procedure of Algorithm \ref{alg:search-algorithm_diskann} finds an accurate nearest neighbor of $q$ in $X$.

\begin{theorem}[Theorem $3.4$ in \cite{NEURIPS2023_worstcase}]\label{thm:greedy_analysis}
For $\eps \in (0, 1)$, there exists an $\Omega(1/\eps)$-shortcut reachable graph index for a metric $d$ with max degree $\text{Deg} \le (1/\eps)^{O(\lambda_d)}\log(\Delta_d)$ (via Theorem \ref{thm:alpha_graph}). For any query $q$, Algorithm \ref{alg:search-algorithm_diskann} on this graph index finds a $(1 + \eps)$ nearest neighbor of $q$ in $X$ (under metric $d$) in $S\le O(\log(\Delta_d))$ steps and makes at most $S \cdot \text{Deg} \le (1/\eps)^{O(\lambda_d)}\log(\Delta_d)^2$ calls to $d$.
\end{theorem}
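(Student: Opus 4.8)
The plan is to feed Theorem~\ref{thm:alpha_graph} the parameter $\alpha = \Theta(1/\eps)$ --- concretely, any $\alpha \ge 6/\eps$ works --- and then analyze the greedy walk of Algorithm~\ref{alg:search-algorithm_diskann} on the resulting graph $G$. This already yields the existence claim and the degree bound: $G$ is $\alpha$-shortcut reachable under $d$ with per-vertex degree $\text{Deg} \le \alpha^{O(\lambda_d)}\log(\Delta_d) = (1/\eps)^{O(\lambda_d)}\log(\Delta_d)$. Fix a query $q$, let $p^\ast = \argmin_{p\in X} d(p,q)$, and let $p_0, p_1, \dots$ be the visited vertices, where $p_{i+1} = \argmin_{u \in \{p_i\}\cup N(p_i)} d(u,q)$ and the walk halts at the first $p_i$ no neighbor of which is strictly closer to $q$. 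Each actual move strictly decreases $d(\cdot, q)$, so the walk never repeats a vertex and therefore terminates; the work is to control its length and its output.

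Step one, the accuracy guarantee: \emph{if the walk halts at $p_i$, then $d(p_i,q)\le (1+\eps)d(p^\ast,q)$.} Suppose instead $d(p_i,q) > (1+\eps)d(p^\ast,q)$, and apply $\alpha$-shortcut reachability from $p_i$ to the data point $p^\ast$. If $(p_i,p^\ast)\in E$, then $p^\ast$ is a neighbor with $d(p^\ast,q) < d(p_i,q)$, contradicting that $p_i$ is a local minimum. Otherwise there is $p'$ with $(p_i,p')\in E$ and $d(p',p^\ast)\le d(p_i,p^\ast)/\alpha$. Using $d(p_i,p^\ast)\le d(p_i,q)+d(q,p^\ast) < d(p_i,q)\cdot\tfrac{2+\eps}{1+\eps}$ and the triangle inequality,
\[
d(p',q)\le d(p',p^\ast)+d(p^\ast,q) < d(p_i,q)\Bigl(\tfrac{2+\eps}{\alpha(1+\eps)}+\tfrac1{1+\eps}\Bigr) < d(p_i,q)
\]
whenever $\alpha > \tfrac{2+\eps}{\eps}$, i.e. for $\alpha=\Omega(1/\eps)$ --- again contradicting local minimality. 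Hence the halting vertex is a $(1+\eps)$-approximate nearest neighbor under $d$.

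Step two, bounding the number of steps $S$. I would run the walk until the first time $T$ it halts or reaches a vertex with $d(p_T,q)\le(1+\eps)d(p^\ast,q)$, and for $i<T$ split on how far $p_i$ is from $p^\ast$. In the ``far'' regime $d(p_i,p^\ast)\ge \alpha\, d(p^\ast,q)$: if $(p_i,p^\ast)\in E$ the next move lands on an exact nearest neighbor and $T\le i+1$; otherwise the shortcut witness $p'$ gives $d(p_{i+1},q)\le d(p',q)\le d(p_i,p^\ast)/\alpha + d(p^\ast,q)$, so $d(p_{i+1},p^\ast)\le d(p_{i+1},q)+d(q,p^\ast)\le d(p_i,p^\ast)/\alpha + 2d(p^\ast,q)\le \tfrac3\alpha d(p_i,p^\ast)\le \tfrac12 d(p_i,p^\ast)$. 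Thus $d(\cdot,p^\ast)$ at least halves each far step, and since $d(p_0,p^\ast)\le \mathrm{diam}(X)$ this regime lasts $O(\log\Delta_d)$ steps before the walk halts, enters the ``close'' regime $d(p_i,p^\ast) < \alpha\, d(p^\ast,q)$, or --- in the degenerate case $d(p^\ast,q)=0$ --- is driven to $p^\ast$ itself. In the close regime $d(p_i,q) = O(\tfrac1\eps)\,d(p^\ast,q)$, and the computation of step one applied to the actual move shows $d(p_{i+1},q)\le (1-\Omega(\eps))\,d(p_i,q)$ as long as $p_i$ is not yet a $(1+\eps)$-approximation, so this regime lasts only $O(\tfrac1\eps\log\tfrac1\eps)$ further steps; both terms are $O(\log\Delta_d)$ under the theorem's conventions. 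Finally, each step queries $d$ only between $q$ and the out-neighbors of the current vertex, i.e. at most $\text{Deg}$ calls, for a total of $S\cdot\text{Deg}\le (1/\eps)^{O(\lambda_d)}\log(\Delta_d)^2$.

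The main obstacle I anticipate is the clean $S=O(\log\Delta_d)$ bound: the far regime contracts the potential $d(\cdot,p^\ast)$ by a fixed constant per step, but entering and then progressing within the close regime requires the finer, $\eps$-sensitive contraction of the potential $d(\cdot,q)$, and one has to argue the two potentials hand off correctly (and handle $d(p^\ast,q)=0$ separately). The accuracy guarantee, by contrast, is essentially a one-line triangle-inequality argument once $\alpha$ is pinned at $\Omega(1/\eps)$; the only real care needed is choosing the constant (e.g.\ $\alpha = 6/\eps$) so that the \emph{same} value simultaneously makes the far-regime halving and the close-regime $(1-\Omega(\eps))$ contraction go through.
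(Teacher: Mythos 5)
The paper itself does not prove Theorem~\ref{thm:greedy_analysis}; it is imported verbatim as Theorem~3.4 of~\cite{NEURIPS2023_worstcase}, so there is no in-paper argument to compare against, and I assess your proof on its own terms. The existence and degree bounds, the Step-one accuracy argument, and the far-regime halving of the potential $v_i := d(p_i,p^\ast)$ are all correct.

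The gap is in the close regime. You switch potentials to $d(p_i,q)$, derive a $(1-\Omega(\eps))$-per-step contraction, conclude the close regime lasts $O(\tfrac1\eps\log\tfrac1\eps)$ iterations, and then assert that ``both terms are $O(\log\Delta_d)$ under the theorem's conventions.'' There is no such convention: $\eps$ and $\Delta_d$ are independent, and for small $\eps$ your close-regime count exceeds $\log\Delta_d$ by an arbitrarily large factor, so $S\le O(\log\Delta_d)$ is not established. The resolution (the ``handoff'' you flag as the anticipated obstacle but do not carry out) is to not switch potentials at all: the same shortcut step gives $v_{i+1}\le v_i/\alpha + 2\,d(p^\ast,q)$, so one iteration after entering the regime $v_i < \alpha\, d(p^\ast,q)$ already forces $v_{i+1}<3\,d(p^\ast,q)$, and the very next iteration gives $d(p_{i+2},q)\le v_{i+1}/\alpha + d(p^\ast,q) < \bigl(1+\tfrac{\eps}{2}\bigr)d(p^\ast,q)$ once $\alpha\ge 6/\eps$ (or the walk halts earlier, in which case your Step-one argument applies). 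Hence the close regime contributes $O(1)$ iterations rather than $\Theta(\tfrac1\eps\log\tfrac1\eps)$, and the total step count is governed by the far-regime halving of $v_i$ from $O(\Delta_d)$ down to $\max\bigl(1,\,\alpha\,d(p^\ast,q)\bigr)$, which is $O(\log\Delta_d)$ as claimed and also handles $d(p^\ast,q)=0$ automatically since $v_i<1$ forces $p_i=p^\ast$.
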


\paragraph{The main theorem.}
We are now ready to state the main theorem of Section \ref{sec:diskann_analysis}.

\begin{theorem}\label{thm:diskann_main}
Let $Q_{\mathtt{DiskAnn}}(\eps, \Delta_d, \lambda_d) = (1/\eps)^{O(\lambda_d)}\log(\Delta_d)^2 $ denote the query complexity of the standard DiskANN data structure\footnote{I.e., the upper bound on the number of calls made to $d$ on any query}, where we build and search using the same metric $d$. Now consider two metrics $d$ and $D$ satisfying Equation \ref{eq:relationship_metric}. Suppose we build an $C/\eps$-shortcut reachability graph $G$ using Theorem \ref{thm:alpha_graph} for metric $d$, but search using metric $D$ in Algorithm \ref{alg:search-algorithm_diskann} for a query $q$. Then the following holds:
\begin{enumerate}
    \item The space used by $G$ is  at most $n \cdot (C/\eps)^{O( \lambda_d)} \log(\Delta_d)$.
    \item Running Algorithm \ref{alg:search-algorithm_diskann} using $D$ finds a $1+\eps$ approximate nearest neighbor of $q$ in the dataset $X$ (under metric $D$).
    \item On any query $q$, Algorithm \ref{alg:search-algorithm_diskann} invokes $D$ at most $Q_{\mathtt{DiskAnn}}(\eps/C, C \Delta_d, \lambda_d)$.
    \end{enumerate}
\end{theorem}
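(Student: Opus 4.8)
The plan is to analyze what happens when we build the shortcut reachability graph for $d$ but run the greedy search with $D$, reducing everything to the single-metric guarantees of Theorems \ref{thm:alpha_graph} and \ref{thm:greedy_analysis}. The first item is immediate: instantiating Theorem \ref{thm:alpha_graph} with $\alpha = C/\eps$ gives a graph of space $n \cdot (C/\eps)^{O(\lambda_d)}\log(\Delta_d)$, so nothing new is needed there.

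For the second item, the key observation is a ``distortion transfer'' for the shortcut reachability property. Suppose $G$ is $\alpha$-shortcut reachable under $d$ with $\alpha = C/\eps'$ for a suitable $\eps'$. I would show $G$ is $(\alpha/C)$-shortcut reachable under $D$: for any $p, q$ with $(p,q)\notin E$, pick the $d$-witness $p'$ with $(p,p')\in E$ and $d(p',q)\cdot\alpha \le d(p,q)$; then by Equation \ref{eq:relationship_metric}, $D(p',q) \le C\, d(p',q) \le (C/\alpha)\, d(p,q) \le (C/\alpha)\, D(p,q)$, so $D(p',q)\cdot(\alpha/C) \le D(p,q)$, as claimed. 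Hence $G$ is an $\Omega(1/\eps)$-shortcut reachable graph \emph{under $D$} once we choose $\alpha$ so that $\alpha/C = \Omega(1/\eps)$, i.e. $\alpha = \Theta(C/\eps)$, matching our construction. Now the correctness half of Theorem \ref{thm:greedy_analysis} applies verbatim with metric $D$ in the role of its generic metric: Algorithm \ref{alg:search-algorithm_diskann} run with $D$ on this graph returns a $(1+\eps)$-approximate nearest neighbor under $D$. The only subtlety is bookkeeping the constants in the $\Omega(1/\eps)$ so that the $C/\eps$ we build is large enough; this is where I would be most careful, but it is routine.

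For the third item, I would again invoke Theorem \ref{thm:greedy_analysis}, now tracking the query cost as a function of the ``effective'' parameters seen by the search. The search runs on a graph that, viewed through $D$, behaves like an $\Omega(1/\eps)$-shortcut reachable graph, but its max degree is the degree of the $(C/\eps)$-graph built for $d$, namely $\text{Deg} \le (C/\eps)^{O(\lambda_d)}\log(\Delta_d)$; note $\lambda_d$, not $\lambda_D$, governs the degree since the graph was built with $d$. The number of steps $S$ is $O(\log \Delta_D)$ by the step bound in Theorem \ref{thm:greedy_analysis} applied to $D$, and since $\Delta_D \le C\Delta_d$ we may write this as $O(\log(C\Delta_d))$. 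Multiplying, the number of calls to $D$ is at most $S\cdot\text{Deg} \le (C/\eps)^{O(\lambda_d)}\log(C\Delta_d)^2 = (1/(\eps/C))^{O(\lambda_d)}\log(C\Delta_d)^2 = Q_{\mathtt{DiskAnn}}(\eps/C,\, C\Delta_d,\, \lambda_d)$, which is exactly the claimed bound.

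The main obstacle, such as it is, is conceptual bookkeeping rather than a hard estimate: one must be scrupulous that every invocation of the single-metric DiskANN guarantees (shortcut reachability, greedy correctness, step count, degree) is applied with the \emph{correct} metric in each slot — $d$ for the combinatorial structure of $G$ and its degree, $D$ for the reachability witnesses used in the search correctness and for the aspect ratio controlling the number of steps — and that the $C$-approximation inequality is used in the right direction ($D \le C d$ to transfer reachability, $D \ge d$ nowhere actually needed for the upper bounds here). Once the distortion-transfer lemma for shortcut reachability is isolated, the three items follow by plugging $\alpha = C/\eps$, $\eps \mapsto \eps/C$, and $\Delta_d \mapsto C\Delta_d$ into the cited theorems.
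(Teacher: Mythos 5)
Your proposal is correct and follows essentially the same route as the paper: the key ``distortion transfer'' step you isolate is exactly the paper's Lemma~\ref{lm:reachable}, and you plug into Theorem~\ref{thm:greedy_analysis} with the same parameter substitutions ($\alpha = C/\eps$, aspect ratio $C\Delta_d$, degree governed by $\lambda_d$). One tiny slip in the commentary: your parenthetical claims $D \ge d$ is ``nowhere actually needed,'' but the final inequality $(C/\alpha)\,d(p,q) \le (C/\alpha)\,D(p,q)$ in your own chain is precisely that direction --- both sides of Equation~\ref{eq:relationship_metric} are used, just as in the paper's Lemma~\ref{lm:reachable}.
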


To prove the theorem, we first show that a shortcut reachability graph of $d$ is also a shortcut reachability graph of $D$, albeit with slightly different parameters.

\begin{lemma}\label{lm:reachable}
Suppose metrics $d$ and $D$ satisfy relation \eqref{eq:relationship_metric}. Suppose $G=(X,E)$ is $\alpha$-shortcut reachable under $d$ for $\alpha>C$. Then $G=(X,E)$ is an $\alpha/C$-shortcut reachable under $D$.
\end{lemma}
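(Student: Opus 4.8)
The plan is to simply unwind Definition~\ref{def:alpha_graph} and verify the one-step shortcut condition directly, transporting the witness neighbor from the $d$-metric to the $D$-metric by a short chain of inequalities. Fix an arbitrary vertex $p \in X$; I want to show that $G$ is $(\alpha/C)$-shortcut reachable from $p$ under $D$. Let $q \in X$ be any other vertex. If $(p,q) \in E$ there is nothing to prove, so assume $(p,q) \notin E$.

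Since $G$ is $\alpha$-shortcut reachable from $p$ under $d$, there exists $p' \in X$ with $(p,p') \in E$ and $\alpha \cdot d(p',q) \le d(p,q)$. I would then chain the two halves of \eqref{eq:relationship_metric}: the upper bound $D(p',q) \le C \cdot d(p',q)$, the relation $d(p',q) \le d(p,q)/\alpha$, and the lower bound $d(p,q) \le D(p,q)$ together give
\[
D(p',q) \;\le\; C \cdot d(p',q) \;\le\; \frac{C}{\alpha}\, d(p,q) \;\le\; \frac{C}{\alpha}\, D(p,q),
\]
i.e. $(\alpha/C)\cdot D(p',q) \le D(p,q)$, which is exactly the witness required by Definition~\ref{def:alpha_graph} for the pair $(p,q)$ under $D$ with parameter $\alpha/C$. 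Because $\alpha > C$, the value $\alpha/C$ is a legitimate ($\ge 1$) shortcut-reachability parameter, so no degeneracy arises; and if $d(p',q)=0$ the claimed inequality holds trivially.

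Since $q$ was arbitrary, $G$ is $(\alpha/C)$-shortcut reachable from $p$ under $D$, and since $p$ was arbitrary, $G$ is $(\alpha/C)$-shortcut reachable under $D$. There is no real obstacle here: the only point requiring care is the direction in which each inequality of \eqref{eq:relationship_metric} is applied — the cheap metric $d$ under-estimates $D$, so one uses $d \le D$ at the source vertex $p$ but $D \le C\,d$ at the candidate neighbor $p'$ — and observing that the hypothesis $\alpha > C$ is precisely what keeps the resulting parameter $\alpha/C$ at least $1$.
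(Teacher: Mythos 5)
Your proof is correct and follows essentially the same chain of inequalities as the paper's: apply the $d$-shortcut property to obtain a witness $p'$, then use $D(p',q) \le C\,d(p',q)$ on the neighbor side and $d(p,q) \le D(p,q)$ on the source side to transport the bound to $D$. The extra remarks about $\alpha/C \ge 1$ and the degenerate zero-distance case are harmless but add nothing beyond the paper's argument.
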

\begin{proof}
    Let $(p,q)$ be a pair of distinct vertices such that $(p,q) \not \in E$. Then we know that there exists a $(p,p') \in E$ such that $d(p',q) \cdot \alpha \le d(p,q)$. From relation \eqref{eq:relationship_metric}, we have $\frac{1}C \cdot D(p', q) \cdot \alpha \le  d(p', q) \cdot \alpha \le d(p,q) \le D(p,q)$,
    as desired.    
\end{proof}

\begin{proof}[Proof of Theorem \ref{thm:diskann_main}]
    By Lemma~\ref{lm:reachable}, the graph $G=(X,E)$ constructed for metric $d$ is also a $O(1/\eps)$ reachable for the other metric $D$. Then we simply invoke Theorem \ref{thm:greedy_analysis} 
    for a $(1/\eps)$-reachable graph index for metric $D$ with degree limit $Deg\le (C/\eps)^{O(\lambda_d)}\log(\Delta_d)$ and the number of greedy search steps $S\le O(\log(C\Delta_d))$. Thus the total number of $D$ distance call bounded by $(C/\eps)^{O(\lambda_d)}\log(C\Delta_d)^2\le Q_{\mathtt{DiskAnn}}(\eps/C, C \Delta_d, \lambda_d)$.
    This proves the accuracy bound as well as the number of calls we make to metric $D$ during the greedy search procedure of Algorithm \ref{alg:search-algorithm_diskann}. The space bound follows from Theorem \ref{thm:alpha_graph}, since $G$ is a $C/\eps$-reachability graph for metric $d$.
\end{proof}

\section{Experiments}\label{sec:experiments}
In this section we present an experimental evaluation of our approach. The starting point of our implementation is the DiskANN based algorithm from Theorem~\ref{thm:diskann_main}, which we engineer to optimize the performance\footnote{Our experiments are run on 56 AMD EPYC-Rome processors with 400GB of memory and 4 NVIDIA RTX 6000 GPUs. Our experiment in Figure~\ref{fig:main-table} takes roughly 3 days.}. We compare it to two other methods on all 15 MTEB retrieval tasks \cite{thakur2021beir}. 

\subsection{Experiment Setup}

\paragraph{Methods}
We evaluate the following methods. $\mathcal{Q}$ denotes the query budget, i.e., the maximum number of calls an algorithm can make to $D$ during a query. We vary  $\mathcal{Q}$ in our experiments.
 
\begin{itemize}[leftmargin=*]
\item \underline{Bi-metric (our method)}: We build a graph index with the cheap distance function $d$ (we discuss our choice of graph indices in the experiments shortly). Given a query $q$, we first search for $q$'s top-$\mathcal{Q}/2$ nearest neighbor under metric $d$. Then, we start a second-stage search from the $\mathcal{Q}/2$ returned vertices using distance function $D$ on the same graph index until we reach the quota $\mathcal{Q}$. We report the 10 closest neighbors seen so far by distance function $D$.

\item \underline{Bi-metric (baseline)}: This is the standard retrieve + rerank method that is widely popular. We build a graph index with the cheap distance function $d$. Given a query $q$, we first search for $q$'s top-$\mathcal{Q}$ nearest neighbor under metric $d$. As explained below, we can assume that empirically the first step returns the \emph{true} top-$\mathcal{Q}$ nearest neighbors under $d$. Then, we calculate  distance using $D$ for all the $\mathcal{Q}$ returned vertices and report the top-10. 

\item \underline{Single metric}: This is the standard nearest neighbor search with $D$ distance. We build the graph index directly with the expensive distance function $D$. Given a query $q$, we do a standard greedy search to get the top-10 closest vertices to $q$ with respect to distance $D$ until we reach quota $\mathcal{Q}$. We help this method and ignore the large number of $D$ distance calls in the indexing phase and only count towards the quota in the search phase. Note that this method doesn't satisfy our ``bi-metric'' formulation as it uses an extensive number of $D$ distance calls ($\Omega(n)$ calls) in index construction. However, we implement it for comparison since it represents a natural baseline, if one does not care about the prohibitively large number of calls made to $D$ during index building.
\end{itemize}

For both Bi-metric methods (ours and baseline), in the first-stage search under distance $d$, we initialize the parameters of the graph index so that empirically, it returns the true nearest neighbors under distance $d$. This is done by setting the `query length' parameter $L$ to be $30000$ for dataset with corpus size $>10^6$ (Climate-FEVER~\cite{diggelmann2020climatefever}, FEVER~\cite{thorne-etal-2018-fever}, HotpotQA~\cite{yang-etal-2018-hotpotqa}, MSMARCO~\cite{bajaj2018msmarco}, NQ~\cite{nq}, DBPedia~\cite{dbpedia}) and 5000 for the other datasets. Our choice of $L$ is large enough to ensure that the returned vertices are almost true nearest neighbors under distance $d$. For example, the standard parameters used are a factor of $10$ smaller. We also empirically verified that the nearest neighbors returned for $d$ with such large values of $L$ corroborated with published MTEB benchmark values~\footnote{from \url{https://huggingface.co/spaces/mteb/leaderboard}} of the distance proxy model. Thus for both Bi-metric methods (our method and baseline), we can equivalently think of the first-stage as running a brute-force method for $d$.

\paragraph{Datasets} We experiment with all of the following 15 MTEB retrieval datasets: Arguana \cite{wachsmuth2018arguana}, ClimateFEVER\cite{diggelmann2020climatefever}, CQADupstackRetrieval\cite{hoogeveen2015cqadupstack}, DBPedia\cite{dbpedia}, FEVER\cite{thorne-etal-2018-fever}, FiQA2018\cite{fiqa}, HotpotQA\cite{yang-etal-2018-hotpotqa}, MSMARCO\cite{bajaj2018msmarco}, NFCorpus\cite{nfcorpus}, NQ\cite{nq}, QuoraRetrieval\cite{thakur2021beir} 
SCIDOCS\cite{scidocs}, SciFact\cite{scifact}, Touche2020\cite{touche}, TRECCOVID\cite{treccovid}. As a standard practice, we report the results on these dataests' test split, except for MSMARCO where we report the results on its dev split.

\paragraph{Embedding Models} We select the current top-ranked model ``SFR-Embedding-Mistral'' as our expensive model to provide groundtruth metric $D$. Meanwhile, we select three models on the pareto curve of the MTEB retrieval size-average score plot to test how our method performs under different model scale combinations. These three small models are ``bge-micro-v2'', ``gte-small'', ``bge-base-en-v1.5''. Please refer to Table~\ref{tab:models} for details.

\paragraph{Nearest Neighbor Search Algorithms} The nearest neighbor search algorithms we employ in our experiments are DiskANN\cite{jayaram2019diskann} and NSG\cite{NSG}. 
The parameter choices for DiskANN are $\alpha=1.2$, $l\_build=125$, $max\_outdegree=64$ (the standard choices used in ANN benchmarks \cite{annbenchmark}). The parameter choices for NSG are the same as the authors' choices for GIST1M dataset \cite{PQ}: 
$K=400$, $L=400$, $iter=12$, $S=15$, $R=100$. 
NSG also requires building a knn-graph with efanna, where we use the standard parameters: 
$L=60$, $R=70$, $C=500$.

\paragraph{Metric} Given a fixed expensive distance function quota $\mathcal{Q}$, we report the accuracy of retrieved results for different methods. We always insure that all algorithms never use more than $\mathcal{Q}$ expensive distance computations. Following the MTEB retrieval benchmark, we report the NDCG@10 score. Following the standard nearest neighbor search algorithm benchmark metric, we also report the Recall@10 score compared to the true nearest neighbor according to the expensive metric $D$.

\subsection{Experiment Results and Analysis}
Please refer to Figure~\ref{fig:main-table} 
for our results with $d$ distance function set to ``bge-micro-v2'' and $D$ distance function set to ``SFR-Embedding-Mistral'', with the underlying graph index set to the DiskANN algorithm. To better focus on the convergence speed of different methods, we cut off the y-axis at a relatively high accuracy, so some curves may not start from x equals 0 if their accuracy doesn't reach the threshold.

We can see that our method converges to the optimal accuracy much faster than Bi-metric (baseline) and Single metric in most cases. For example for HotpotQA, the NDCG@10 score achieved by the baselines for $8000$ calls to $D$ is comparable to our method, using less than $2000$ calls to $D$, leading to $\textbf{>4x}$ fewer evaluations of the expensive model.

This means that utilizing the graph index built for the distance function proxy to perform a greedy search using $D$ is more efficient than naively iterating the returned vertex list to re-rank using $D$ (baseline). It is also noteworthy to see that our method converges faster than ``Single metric'' in all the datasets except FiQA2018 and TRECCOVID, especially in the earlier stages. This phenomenon happens even if ``Single metric'' is allowed infinite expensive distance function calls in its indexing phase to build the ground truth graph index. This suggests that the quality of the underlying graph index is not as important, and the early routing steps in the searching algorithm can be guided with a cheap distance proxy functions to save expensive distance function calls.

Similar conclusion holds for the recall plot (see Figure~\ref{fig:main-table-recall}) as well, where our method has an even larger advantage over Bi-metric (baseline) and is also better than the Single metric in most cases, except for FEVER, FiQA2018, and HotpotQA. We report the results of using different model pairs, using the NSG algorithm as our graph index, and measuring Recall@10 in Appendix~\ref{sec:complete-experiment-results}. Please see their ablation studies in Section~\ref{sec:ablation}.

\begin{figure}[!ht]
\centering
\includegraphics[width=0.99\textwidth]{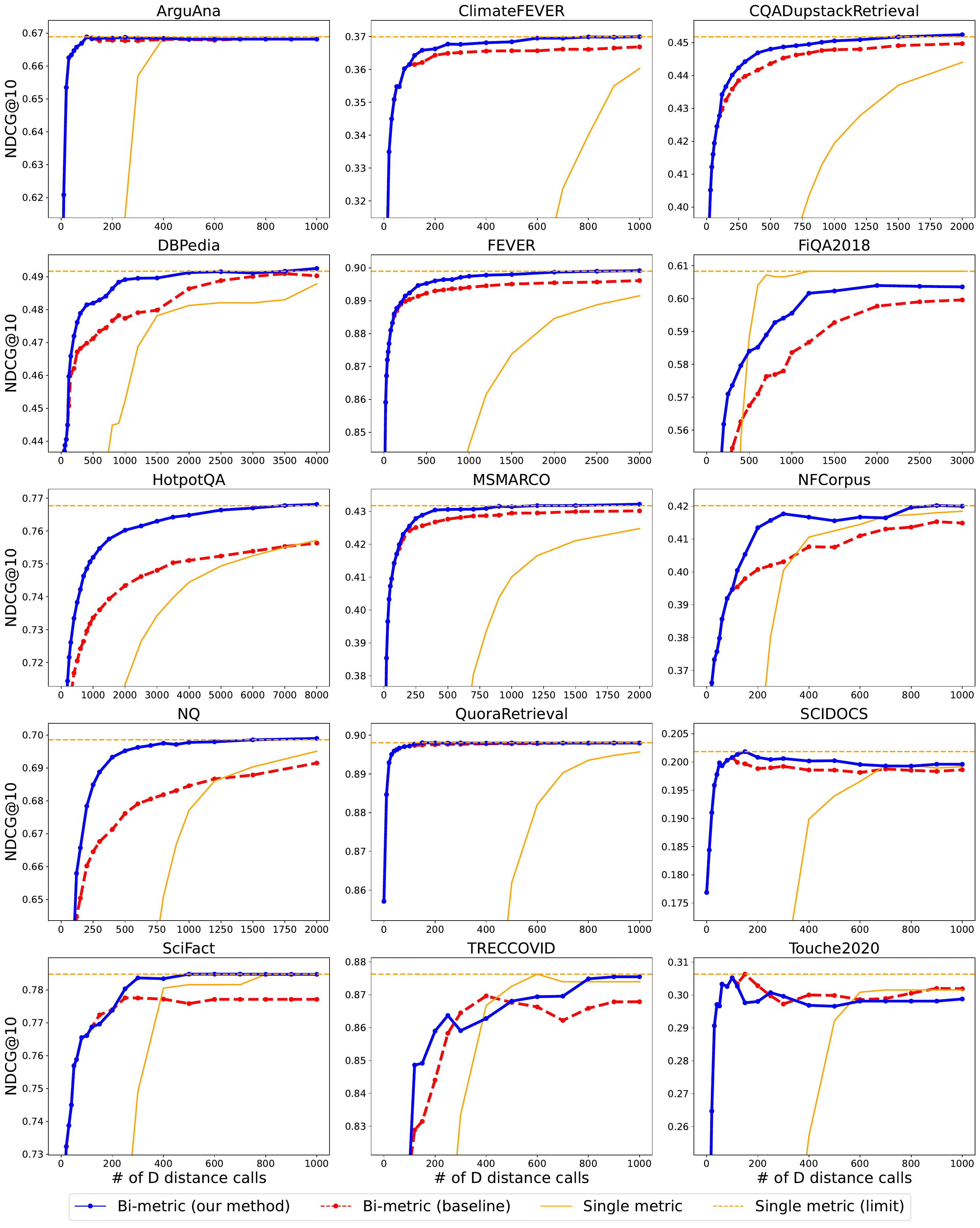}   
\caption{Results for 15 MTEB Retrieval datasets. The x-axis is the number of expensive distance function calls. The y-axis is the NDCG@10 score. The cheap model is ``bge-micro-v2'', the expensive model is ``SFR-Embedding-Mistral'', and the nearest neighbor search algorithm used is DiskANN.}
\label{fig:main-table}
\end{figure}

\subsection{Ablation studies}\label{sec:ablation}
We investigate the impact of different components of our method. All ablation studies are run on HotpotQA dataset as it is one of the largest and most difficult retrieval dataset where the performance gaps between different methods are substantial.

\paragraph{Different model pairs} Fixing the expensive model to be ``SFR-Embedding-mistral'' \cite{SFRAIResearch2024}, we experiment with 2 other cheap models from the model list of the MTEB retrieval benchmark: ``gte-small'' \cite{li2023towardsgte} and ``bge-base'' \cite{bgeembedding}. These models have different sizes and capabilities, summarized in Table~\ref{tab:models}. For complete results on all 15 MTEB Retrieval datasets for different cheap models, we refer to Figures~\ref{fig:diskann-gte-small-ndcg},~\ref{fig:diskann-gte-small-recall},~\ref{fig:diskann-bge-base-ndcg}, and~\ref{fig:diskann-bge-base-recall} in Appendix~\ref{sec:complete-experiment-results}. Here, we only focus on the HotpotQA dataset.

\begin{table}[!ht]
\begin{tabular}{cccc}
\toprule
Model Name            & Embedding Dimension & Model Size & MTEB Retrieval Score \\
\midrule
SFR-Embedding-Mistral \cite{SFRAIResearch2024} & 4096                & 7111M      & 59                   \\
bge-base-en-v1.5 \cite{bgeembedding}      & 768                 & 109M       & 53.25                \\
gte-small \cite{li2023towardsgte}  & 384                 & 33M        & 49.46                \\
bge-micro-v2 \cite{bgemicro}         & 384                 & 17M        & 42.56                \\
\bottomrule
\end{tabular}
\caption{Different cheap models used in our experiments}
\label{tab:models}
\end{table}

From Figure~\ref{fig:hotpotqa-multi}, we can observe that the improvement of our method is most substantial when there is a large gap between the qualities of the cheap and expensive models. This is not surprising: If the cheap model has already provided enough accurate distances, simple re-ranking can easily get to the optimal retrieval results with only a few expensive distance calls. Note that even in the latter case, our second-stage search method still performs at least as good as re-ranking. Therefore, we believe that the ideal scenario for our method is a small and efficient model deployed locally, paired with a remote large model accessed online through API calls to maximize the advantages of our method.

\paragraph{Different nearest neighbor search algorithms} We implement our method with another popular empirical nearest neighbor search algorithm called NSG \cite{fu2019fast}. 
Because the authors' implementation of NSG only supports $l_2$ distances, we first normalize all the embeddings and search via $l_2$ distance. This may cause some performance drops. Therefore, we are not comparing the results between the DiskANN and NSG algorithms, but only results from different methods, fixing the graph index.

In Figure~\ref{fig:nsg-bge-micro-ndcg} and ~\ref{fig:nsg-bge-micro-recall} in the appendix, we observe that our method still performs the best compared to Bi-metric (baseline) and single metric in most cases, demonstrating that our bi-metric framework can be applied to other graph-based nearest neighbor search algorithms as well.

\begin{figure}[ht]
    \centering
    \begin{minipage}{0.49\textwidth}
        \centering
        \includegraphics[width=0.99\linewidth]{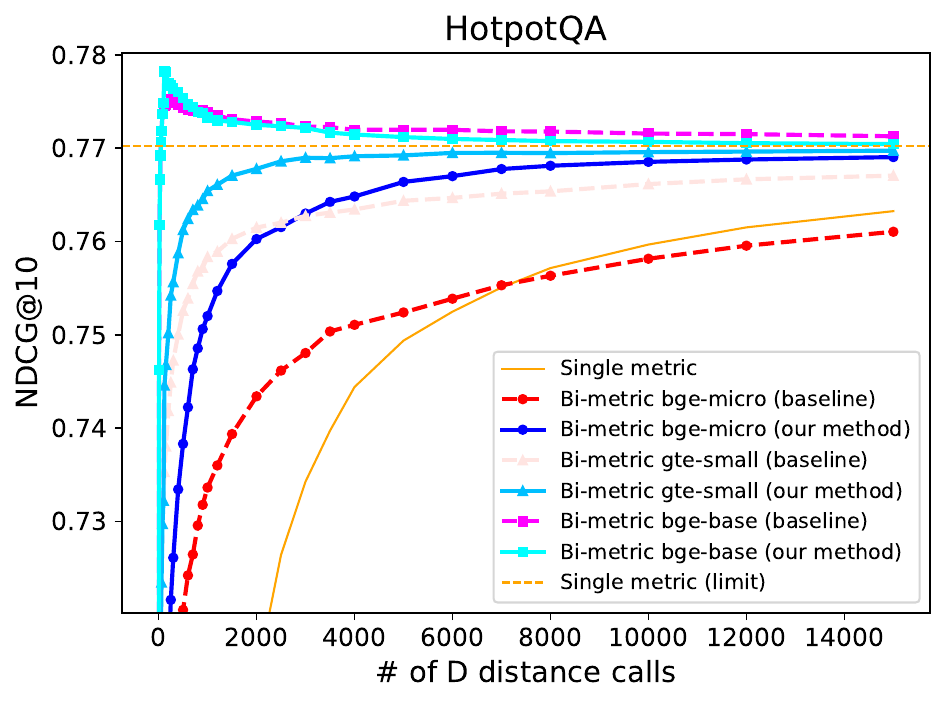}
        \caption{HotpotQA test results for different models as the distance proxy. Blue / skyblue / cyan curves represent Bi-metric (our method) with bge-micro / gte-small / bge-base models. Red / rose / magenta curves represent Bi-metric (baseline) with bge-micro / gte-small / bge-base models}
        \label{fig:hotpotqa-multi}
    \end{minipage}
    \hfill
    \begin{minipage}{0.49\textwidth}
        \centering
        \includegraphics[width=0.99\linewidth]{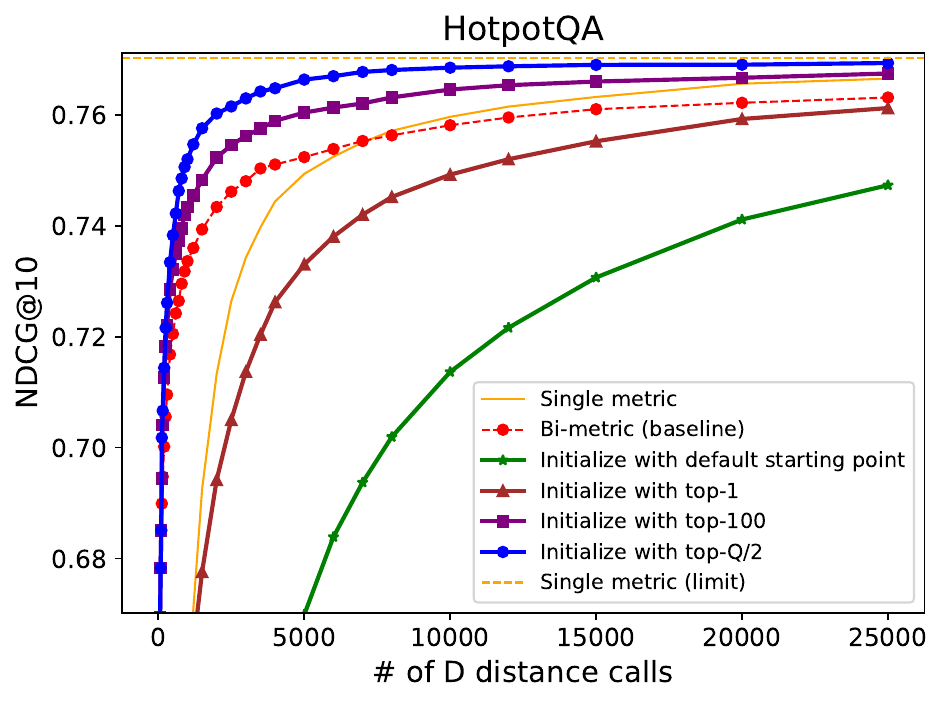}
        \caption{HotpotQA test results for different search initializations for the second-stage search of Bi-metric (our method). Blue / purple / brown / green curves represent initializing our second-stage search with top-$\mathcal{Q}/2$, top-100, top-1, or the default vertex.}
        \label{fig:hotpotqa-start}
    \end{minipage}
\end{figure}

\vspace{-0.2cm}

\paragraph{Impact of the first stage search} In the second-stage search of our method, we start from multiple points returned by the first-stage search via the cheap distance metric. We investigate how varying the starting points for the second-stage search impact the final results. We try four different
setups:

\begin{itemize}[leftmargin=*]
    \item Default: We start a standard nearest neighbor search using metric $D$ from the default entry point of the graph index, which means that we don't use the first stage search.
    \item Top-$K$ points retrieved by the first stage search: Suppose our expensive distance calls quota is $\mathcal{Q}$. We start our second search from the top $K$ points retrieved by the first stage search. We experiment with the following different choices of $K$: $K_1=1$, $K_{100}=100$, $K_{\mathcal{Q}/2}=\max(100,\mathcal{Q}/2)$ (note $K_{\mathcal{Q}/2}$ is the choice we use in Figure~\ref{fig:main-table}).
\end{itemize}

From Figure~\ref{fig:hotpotqa-start}, we observe that utilizing results from the first-stage search helps the second-stage search to find the nearest neighbor quicker. 

For comparison, we experiment with initializing the second-stage search from the default starting point (green), which means that we don't need the first-stage search and only use the graph index built from $d$ (cheap distance function). The DiskANN algorithm still manages to improve as the allowed number of $D$ distance calls increases, but it converges the slowest compared to all the other methods.

Using multiple starting points further speeds up the second stage search. If we only start with the top-1 point from the first stage search (brown), its NDCG curve is still worse than Bi-metric (baseline, red) and Single metric (orange). As we switch to top-100 (purple) or top-$\mathcal{Q}/2$ (blue) starting points, the NDCG curves increase evidently.

We provide two intuitive explanations for these phenomena. First, the approximation error of the cheap distance function doesn't matter that much in the earlier stage of the search, so the first stage search with the cheap distance function can quickly get to the true `local' neighborhood without any expensive distance calls, thus saving resource for the second stage search. Second, the ranking provided by the cheap distance function is not accurate because of its approximation error, so starting from multiple points should give better results than solely starting from the top few, which also justifies the advantage of our second-stage search over re-ranking.

\bibliographystyle{plain}
\bibliography{ref}

\begin{thebibliography}{10}

\bibitem{bgemicro}
Taylor AI.
\newblock https://huggingface.co/taylorai/bge-micro-v2, 2023.

\bibitem{annbenchmark}
Martin Aumüller, Erik Bernhardsson, and Alexander Faithfull.
\newblock Ann-benchmarks: A benchmarking tool for approximate nearest neighbor algorithms.
\newblock {\em Information Systems}, 87:101374, 2020.

\bibitem{bajaj2018msmarco}
Payal Bajaj, Daniel Campos, Nick Craswell, Li~Deng, Jianfeng Gao, Xiaodong Liu, Rangan Majumder, Andrew McNamara, Bhaskar Mitra, Tri Nguyen, Mir Rosenberg, Xia Song, Alina Stoica, Saurabh Tiwary, and Tong Wang.
\newblock Ms marco: A human generated machine reading comprehension dataset, 2018.

\bibitem{batenietal23}
MohammadHossein Bateni, Prathamesh Dharangutte, Rajesh Jayaram, and Chen Wang.
\newblock Metric clustering and {MST} with strong and weak distance oracles.
\newblock {\em Conference on Learning Theory}, 2024.

\bibitem{beygelzimer2006cover}
Alina Beygelzimer, Sham Kakade, and John Langford.
\newblock Cover trees for nearest neighbor.
\newblock In {\em Proceedings of the 23rd international conference on Machine learning}, pages 97--104, 2006.

\bibitem{touche}
Alexander Bondarenko, Maik Fr{\"o}be, Meriem Beloucif, Lukas Gienapp, Yamen Ajjour, Alexander Panchenko, Chris Biemann, Benno Stein, Henning Wachsmuth, Martin Potthast, and Matthias Hagen.
\newblock Overview of touch{\'e} 2020: Argument retrieval.
\newblock In Avi Arampatzis, Evangelos Kanoulas, Theodora Tsikrika, Stefanos Vrochidis, Hideo Joho, Christina Lioma, Carsten Eickhoff, Aur{\'e}lie N{\'e}v{\'e}ol, Linda Cappellato, and Nicola Ferro, editors, {\em Experimental IR Meets Multilinguality, Multimodality, and Interaction}, pages 384--395, Cham, 2020. Springer International Publishing.

\bibitem{nfcorpus}
Vera Boteva, Demian Gholipour, Artem Sokolov, and Stefan Riezler.
\newblock A full-text learning to rank dataset for medical information retrieval.
\newblock In Nicola Ferro, Fabio Crestani, Marie-Francine Moens, Josiane Mothe, Fabrizio Silvestri, Giorgio~Maria Di~Nunzio, Claudia Hauff, and Gianmaria Silvello, editors, {\em Advances in Information Retrieval}, pages 716--722, Cham, 2016. Springer International Publishing.

\bibitem{clarkson2006nearest}
Kenneth~L Clarkson et~al.
\newblock Nearest-neighbor searching and metric space dimensions.
\newblock {\em Nearest-neighbor methods for learning and vision: theory and practice}, pages 15--59, 2006.

\bibitem{scidocs}
Arman Cohan, Sergey Feldman, Iz~Beltagy, Doug Downey, and Daniel Weld.
\newblock {SPECTER}: Document-level representation learning using citation-informed transformers.
\newblock In Dan Jurafsky, Joyce Chai, Natalie Schluter, and Joel Tetreault, editors, {\em Proceedings of the 58th Annual Meeting of the Association for Computational Linguistics}, pages 2270--2282, Online, July 2020. Association for Computational Linguistics.

\bibitem{diggelmann2020climatefever}
Thomas Diggelmann, Jordan Boyd-Graber, Jannis Bulian, Massimiliano Ciaramita, and Markus Leippold.
\newblock Climate-fever: A dataset for verification of real-world climate claims, 2020.

\bibitem{fu2019fast}
Cong Fu, Chao Xiang, Changxu Wang, and Deng Cai.
\newblock Fast approximate nearest neighbor search with the navigating spreading-out graph.
\newblock {\em Proceedings of the VLDB Endowment}, 12(5):461--474, 2019.

\bibitem{NSG}
Cong Fu, Chao Xiang, Changxu Wang, and Deng Cai.
\newblock Nsg : Navigating spread-out graph for approximate nearest neighbor search.
\newblock \url{https://github.com/ZJULearning/nsg}, 2019.

\bibitem{gao2021rethink}
Luyu Gao, Zhuyun Dai, and Jamie Callan.
\newblock Rethink training of bert rerankers in multi-stage retrieval pipeline.
\newblock In {\em Advances in Information Retrieval: 43rd European Conference on IR Research, ECIR 2021, Virtual Event, March 28--April 1, 2021, Proceedings, Part II 43}, pages 280--286. Springer, 2021.

\bibitem{gao2021simcse}
Tianyu Gao, Xingcheng Yao, and Danqi Chen.
\newblock Simcse: Simple contrastive learning of sentence embeddings.
\newblock In {\em 2021 Conference on Empirical Methods in Natural Language Processing, EMNLP 2021}, pages 6894--6910. Association for Computational Linguistics (ACL), 2021.

\bibitem{GuhaGRS15}
Ramanathan~V. Guha, Vineet Gupta, Vivek Raghunathan, and Ramakrishnan Srikant.
\newblock User modeling for a personal assistant.
\newblock In Xueqi Cheng, Hang Li, Evgeniy Gabrilovich, and Jie Tang, editors, {\em Proceedings of the Eighth {ACM} International Conference on Web Search and Data Mining, {WSDM} 2015, Shanghai, China, February 2-6, 2015}, pages 275--284. {ACM}, 2015.

\bibitem{GuptaKL03}
Anupam Gupta, Robert Krauthgamer, and James~R. Lee.
\newblock Bounded geometries, fractals, and low-distortion embeddings.
\newblock In {\em 44th Symposium on Foundations of Computer Science {(FOCS} 2003), 11-14 October 2003, Cambridge, MA, USA, Proceedings}, pages 534--543. {IEEE} Computer Society, 2003.

\bibitem{Har-PeledK13}
Sariel Har{-}Peled and Nirman Kumar.
\newblock Approximate nearest neighbor search for low-dimensional queries.
\newblock {\em {SIAM} J. Comput.}, 42(1):138--159, 2013.

\bibitem{harwood2016fanng}
Ben Harwood and Tom Drummond.
\newblock Fanng: Fast approximate nearest neighbour graphs.
\newblock In {\em Proceedings of the IEEE Conference on Computer Vision and Pattern Recognition}, pages 5713--5722, 2016.

\bibitem{dbpedia}
Faegheh Hasibi, Fedor Nikolaev, Chenyan Xiong, Krisztian Balog, Svein~Erik Bratsberg, Alexander Kotov, and Jamie Callan.
\newblock Dbpedia-entity v2: A test collection for entity search.
\newblock In {\em Proceedings of the 40th International ACM SIGIR Conference on Research and Development in Information Retrieval}, SIGIR '17, page 1265–1268, New York, NY, USA, 2017. Association for Computing Machinery.

\bibitem{hoogeveen2015cqadupstack}
Doris Hoogeveen, Karin~M. Verspoor, and Timothy Baldwin.
\newblock Cqadupstack: A benchmark data set for community question-answering research.
\newblock In {\em Proceedings of the 20th Australasian Document Computing Symposium (ADCS)}, ADCS '15, pages 3:1--3:8, New York, NY, USA, 2015. ACM.

\bibitem{IndykN07}
Piotr Indyk and Assaf Naor.
\newblock Nearest-neighbor-preserving embeddings.
\newblock {\em {ACM} Trans. Algorithms}, 3(3):31, 2007.

\bibitem{NEURIPS2023_worstcase}
Piotr Indyk and Haike Xu.
\newblock Worst-case performance of popular approximate nearest neighbor search implementations: Guarantees and limitations.
\newblock In A.~Oh, T.~Neumann, A.~Globerson, K.~Saenko, M.~Hardt, and S.~Levine, editors, {\em Advances in Neural Information Processing Systems}, volume~36, pages 66239--66256. Curran Associates, Inc., 2023.

\bibitem{ndcg}
Kalervo J\"{a}rvelin and Jaana Kek\"{a}l\"{a}inen.
\newblock Cumulated gain-based evaluation of ir techniques.
\newblock {\em ACM Trans. Inf. Syst.}, 20(4):422–446, oct 2002.

\bibitem{jayaram2019diskann}
Suhas Jayaram~Subramanya, Fnu Devvrit, Harsha~Vardhan Simhadri, Ravishankar Krishnawamy, and Rohan Kadekodi.
\newblock Diskann: Fast accurate billion-point nearest neighbor search on a single node.
\newblock {\em Advances in Neural Information Processing Systems}, 32, 2019.

\bibitem{PQ}
Herve Jégou, Matthijs Douze, and Cordelia Schmid.
\newblock Product quantization for nearest neighbor search.
\newblock {\em IEEE Transactions on Pattern Analysis and Machine Intelligence}, 33(1):117--128, 2011.

\bibitem{karpukhin-etal-2020-dense}
Vladimir Karpukhin, Barlas Oguz, Sewon Min, Patrick Lewis, Ledell Wu, Sergey Edunov, Danqi Chen, and Wen-tau Yih.
\newblock Dense passage retrieval for open-domain question answering.
\newblock In Bonnie Webber, Trevor Cohn, Yulan He, and Yang Liu, editors, {\em Proceedings of the 2020 Conference on Empirical Methods in Natural Language Processing (EMNLP)}, pages 6769--6781, Online, November 2020. Association for Computational Linguistics.

\bibitem{KrauthgamerL04}
Robert Krauthgamer and James~R. Lee.
\newblock Navigating nets: simple algorithms for proximity search.
\newblock In J.~Ian Munro, editor, {\em Proceedings of the Fifteenth Annual {ACM-SIAM} Symposium on Discrete Algorithms, {SODA} 2004, New Orleans, Louisiana, USA, January 11-14, 2004}, pages 798--807. {SIAM}, 2004.

\bibitem{nq}
Tom Kwiatkowski, Jennimaria Palomaki, Olivia Redfield, Michael Collins, Ankur Parikh, Chris Alberti, Danielle Epstein, Illia Polosukhin, Jacob Devlin, Kenton Lee, Kristina Toutanova, Llion Jones, Matthew Kelcey, Ming-Wei Chang, Andrew~M. Dai, Jakob Uszkoreit, Quoc Le, and Slav Petrov.
\newblock Natural questions: A benchmark for question answering research.
\newblock {\em Transactions of the Association for Computational Linguistics}, 7:452--466, 2019.

\bibitem{li2023towardsgte}
Zehan Li, Xin Zhang, Yanzhao Zhang, Dingkun Long, Pengjun Xie, and Meishan Zhang.
\newblock Towards general text embeddings with multi-stage contrastive learning.
\newblock {\em arXiv preprint arXiv:2308.03281}, 2023.

\bibitem{liu2009learning}
Tie-Yan Liu et~al.
\newblock Learning to rank for information retrieval.
\newblock {\em Foundations and Trends{\textregistered} in Information Retrieval}, 3(3):225--331, 2009.

\bibitem{liu2022neural}
Weiwen Liu, Yunjia Xi, Jiarui Qin, Fei Sun, Bo~Chen, Weinan Zhang, Rui Zhang, and Ruiming Tang.
\newblock Neural re-ranking in multi-stage recommender systems: A review.
\newblock {\em arXiv preprint arXiv:2202.06602}, 2022.

\bibitem{fiqa}
Macedo Maia, Siegfried Handschuh, Andr\'{e} Freitas, Brian Davis, Ross McDermott, Manel Zarrouk, and Alexandra Balahur.
\newblock Www'18 open challenge: Financial opinion mining and question answering.
\newblock In {\em Companion Proceedings of the The Web Conference 2018}, WWW '18, page 1941–1942, Republic and Canton of Geneva, CHE, 2018. International World Wide Web Conferences Steering Committee.

\bibitem{malkov2018efficient}
Yu~A Malkov and Dmitry~A Yashunin.
\newblock Efficient and robust approximate nearest neighbor search using hierarchical navigable small world graphs.
\newblock {\em IEEE transactions on pattern analysis and machine intelligence}, 42(4):824--836, 2018.

\bibitem{SFRAIResearch2024}
Rui Meng, Ye~Liu, Shafiq~Rayhan Joty, Caiming Xiong, Yingbo Zhou, and Semih Yavuz.
\newblock Sfr-embedding-mistral:enhance text retrieval with transfer learning.
\newblock Salesforce AI Research Blog, 2024.

\bibitem{moseley2021hierarchical}
Benjamin Moseley, Sergei Vassilvtiskii, and Yuyan Wang.
\newblock Hierarchical clustering in general metric spaces using approximate nearest neighbors.
\newblock In {\em International Conference on Artificial Intelligence and Statistics}, pages 2440--2448. PMLR, 2021.

\bibitem{NarayananSIZ21}
Shyam Narayanan, Sandeep Silwal, Piotr Indyk, and Or~Zamir.
\newblock Randomized dimensionality reduction for facility location and single-linkage clustering.
\newblock In Marina Meila and Tong Zhang, editors, {\em Proceedings of the 38th International Conference on Machine Learning, {ICML} 2021, 18-24 July 2021, Virtual Event}, volume 139 of {\em Proceedings of Machine Learning Research}, pages 7948--7957. {PMLR}, 2021.

\bibitem{neelakantan2022text}
Arvind Neelakantan, Tao Xu, Raul Puri, Alec Radford, Jesse~Michael Han, Jerry Tworek, Qiming Yuan, Nikolas Tezak, Jong~Wook Kim, Chris Hallacy, Johannes Heidecke, Pranav Shyam, Boris Power, Tyna~Eloundou Nekoul, Girish Sastry, Gretchen Krueger, David Schnurr, Felipe~Petroski Such, Kenny Hsu, Madeleine Thompson, Tabarak Khan, Toki Sherbakov, Joanne Jang, Peter Welinder, and Lilian Weng.
\newblock Text and code embeddings by contrastive pre-training, 2022.

\bibitem{nogueira2020passage}
Rodrigo Nogueira and Kyunghyun Cho.
\newblock Passage re-ranking with bert, 2020.

\bibitem{nogueira2020document}
Rodrigo Nogueira, Zhiying Jiang, Ronak Pradeep, and Jimmy Lin.
\newblock Document ranking with a pretrained sequence-to-sequence model.
\newblock In {\em Findings of the Association for Computational Linguistics: EMNLP 2020}, pages 708--718, 2020.

\bibitem{silwal2023kwikbucks}
Sandeep Silwal, Sara Ahmadian, Andrew Nystrom, Andrew McCallum, Deepak Ramachandran, and Mehran Kazemi.
\newblock Kwikbucks: Correlation clustering with cheap-weak and expensive-strong signals.
\newblock In {\em Proceedings of The Fourth Workshop on Simple and Efficient Natural Language Processing (SustaiNLP)}, pages 1--31, 2023.

\bibitem{thakur2021beir}
Nandan Thakur, Nils Reimers, Andreas R{\"u}ckl{\'e}, Abhishek Srivastava, and Iryna Gurevych.
\newblock {BEIR}: A heterogeneous benchmark for zero-shot evaluation of information retrieval models.
\newblock In {\em Thirty-fifth Conference on Neural Information Processing Systems Datasets and Benchmarks Track (Round 2)}, 2021.

\bibitem{thorne-etal-2018-fever}
James Thorne, Andreas Vlachos, Christos Christodoulopoulos, and Arpit Mittal.
\newblock {FEVER}: a large-scale dataset for fact extraction and {VER}ification.
\newblock In Marilyn Walker, Heng Ji, and Amanda Stent, editors, {\em Proceedings of the 2018 Conference of the North {A}merican Chapter of the Association for Computational Linguistics: Human Language Technologies, Volume 1 (Long Papers)}, pages 809--819, New Orleans, Louisiana, June 2018. Association for Computational Linguistics.

\bibitem{treccovid}
Ellen Voorhees, Tasmeer Alam, Steven Bedrick, Dina Demner-Fushman, William~R. Hersh, Kyle Lo, Kirk Roberts, Ian Soboroff, and Lucy~Lu Wang.
\newblock Trec-covid: constructing a pandemic information retrieval test collection.
\newblock {\em SIGIR Forum}, 54(1), feb 2021.

\bibitem{wachsmuth2018arguana}
Henning Wachsmuth, Shahbaz Syed, and Benno Stein.
\newblock Retrieval of the best counterargument without prior topic knowledge.
\newblock In Iryna Gurevych and Yusuke Miyao, editors, {\em Proceedings of the 56th Annual Meeting of the Association for Computational Linguistics (Volume 1: Long Papers)}, pages 241--251, Melbourne, Australia, July 2018. Association for Computational Linguistics.

\bibitem{scifact}
David Wadden, Shanchuan Lin, Kyle Lo, Lucy~Lu Wang, Madeleine van Zuylen, Arman Cohan, and Hannaneh Hajishirzi.
\newblock Fact or fiction: Verifying scientific claims.
\newblock In Bonnie Webber, Trevor Cohn, Yulan He, and Yang Liu, editors, {\em Proceedings of the 2020 Conference on Empirical Methods in Natural Language Processing (EMNLP)}, pages 7534--7550, Online, November 2020. Association for Computational Linguistics.

\bibitem{wang2024text}
Liang Wang, Nan Yang, Xiaolong Huang, Binxing Jiao, Linjun Yang, Daxin Jiang, Rangan Majumder, and Furu Wei.
\newblock Text embeddings by weakly-supervised contrastive pre-training, 2024.

\bibitem{wang2021comprehensive}
Mengzhao Wang, Xiaoliang Xu, Qiang Yue, and Yuxiang Wang.
\newblock A comprehensive survey and experimental comparison of graph-based approximate nearest neighbor search.
\newblock {\em arXiv preprint arXiv:2101.12631}, 2021.

\bibitem{bgeembedding}
Shitao Xiao, Zheng Liu, Peitian Zhang, and Niklas Muennighoff.
\newblock C-pack: Packaged resources to advance general chinese embedding, 2023.

\bibitem{yang-etal-2018-hotpotqa}
Zhilin Yang, Peng Qi, Saizheng Zhang, Yoshua Bengio, William Cohen, Ruslan Salakhutdinov, and Christopher~D. Manning.
\newblock {H}otpot{QA}: A dataset for diverse, explainable multi-hop question answering.
\newblock In Ellen Riloff, David Chiang, Julia Hockenmaier, and Jun{'}ichi Tsujii, editors, {\em Proceedings of the 2018 Conference on Empirical Methods in Natural Language Processing}, pages 2369--2380, Brussels, Belgium, October-November 2018. Association for Computational Linguistics.

\bibitem{zhong2017re}
Zhun Zhong, Liang Zheng, Donglin Cao, and Shaozi Li.
\newblock Re-ranking person re-identification with k-reciprocal encoding.
\newblock In {\em Proceedings of the IEEE conference on computer vision and pattern recognition}, pages 1318--1327, 2017.

\end{thebibliography}

\appendix

\section{Query algorithm of DiskANN}

\begin{algorithm}[!ht]
\caption{\label{alg:search-algorithm_diskann} DiskANN-GreedySearch($q, d$)}
\begin{algorithmic}[1]
\State \textbf{Input}: Graph index $G=(X,E)$, distance function $d$, starting point $s$, query point $q$
    
\State \textbf{Output}: visited vertex list $U$
\State $s \gets$ an arbitrary starting point in $X$
\State $A\gets \{s\}$
\State $U\gets \varnothing$

\While{$A\setminus U \neq\varnothing$}
    \State $v\gets\argmin_{v\in A\setminus U} d(x_v,q)$
    \State $A \gets A\cup Neighbors(v)$ \Comment{Neighbors in $G$}
    \State $U \gets U\cup v$
    \If{$|A|> 1$}
        \State $A\gets \text{ closest vertex to $q$ in $A$}$
    \EndIf
\EndWhile
\State sort $U$ in increasing distance from $q$
\State \textbf{return} $U$

\end{algorithmic}
\end{algorithm}

\section{Analysis of Cover Tree}\label{sec:covertree}
We now analyze Cover Tree under the bi-metric framework. First, some helpful background is presented below. 

\subsubsection{Preliminaries for Cover Tree}\label{covertree_prelim}
The notion of a cover is central.
\begin{definition}[Cover]\label{def:cover}
A $r$-cover $\mathcal{C}$ of a set $X$ given a metric $d$ is defined as follows. Initially $\mathcal{C} = \emptyset$. Run the following two steps until $X$ is empty.
\begin{enumerate}
    \item Pick an arbitrary point $x \in X$ and remove $B(x, r) \cap X$ from $X$.
    \item Add $x$ to $\mathcal{C}$.
\end{enumerate}

\end{definition}

Note that a cover with radius $r$ satisfies the following two properties: every point in $X$ is within distance $r$ to some point in $\mathcal{C}$ (under the same metric $d'$), and all points in $\mathcal{C}$ are at least distance $r$ apart from each other. 

We now introduce the cover tree datastructure of \cite{beygelzimer2006cover}. For the data structure, we create a sequence of covers $\mathcal{C}_{-1}, \mathcal{C}_0, \ldots$. Every $\mathcal{C}_i$ is a layer in the final Cover Tree $\mathcal{T}$.

\begin{algorithm}[H]
\caption{\label{alg:cover_tree_create}Cover Tree Data structure}
\begin{algorithmic}[1]
\State \textbf{Input:} A set $X$ of $n$ points, metric $d$, real number $T \ge 1$.
\State \textbf{Output:} A tree on $X$
\Procedure{Cover-Tree}{$d, T$}
\State WLOG, all distances between points in $X$ under $d$ are in $(1,\Delta]$ by scaling.
\State $\mathcal{C}_{-1}=\mathcal{C}_0= X$
\State $\mathcal{C}_i$ is a $2^i/T$-cover of $\mathcal{C}_{i-1}$ for $i>0$ under metric $d$
\State $\mathcal{C}_i \subseteq \mathcal{C}_{i-1}$ for $i>0$.
\State $t = O(\log(\Delta T))$ \Comment{$t$ is the number of levels of $\mathcal{T}$}
\For{$i = -1$ to $t$}
\State $\mathcal{C}_i$ corresponds to tree nodes of $\mathcal{T}$ on level $i$
\State Each $p \in \mathcal{C}_{i-1} \setminus \mathcal{C}_i$ is connected to exactly one $p \in \mathcal{C}_i$ such that $d(p, p') \le 2^i/T$
\EndFor
\State \textbf{Return} tree $\mathcal{T}$
\EndProcedure
\end{algorithmic}
\end{algorithm}

\begin{lemma}[Theorem 1 in \cite{beygelzimer2006cover}]\label{lem:space}
$\mathcal{T}$ takes $O(n)$ space, regardless of the value of $T$.
\end{lemma}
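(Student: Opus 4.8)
The plan is to bound the size not of the literal level-by-level tree of Algorithm~\ref{alg:cover_tree_create} but of its coalesced (``explicit'') representation. Written out level by level, $\mathcal{T}$ has levels $-1,0,\ldots,t$ with $t = O(\log(\Delta T))$ and $|\mathcal{C}_i| \le n$ nodes per level, giving only $O(n\log(\Delta T))$; the point of the lemma is to remove this dependence on $\Delta$ and $T$. The key structural observation is that because $\mathcal{C}_i \subseteq \mathcal{C}_{i-1}$, each point $p$ appears as a tree node $(p,i)$ exactly for $-1 \le i \le \ell(p)$, where $\ell(p) = \max\{i : p\in\mathcal{C}_i\}$, and for every such $i \ge 0$ the node $(p,i)$ automatically has $(p,i-1)$ among its children (the ``self-child''); its remaining children, if any, are precisely the points $q \in \mathcal{C}_{i-1}\setminus \mathcal{C}_i$ that got assigned parent $p$ at level $i$. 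I would then define the coalesced tree by merging $(p,i)$ with $(p,i-1)$ whenever $(p,i-1)$ is the \emph{only} child of $(p,i)$; a node of the coalesced tree is thus a maximal single-point chain $(p,a),(p,a-1),\ldots,(p,b)$, stored with one point identifier, the two endpoints $a,b$ of its level range, a parent pointer, and a list of child pointers --- $O(1)$ words plus the child list.

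Next I would count the nodes of the coalesced tree. First, its leaves are in bijection with the points of $X$: for each $p$ the node $(p,-1)$ has no children, so the chain containing it is a leaf, and conversely the bottom node $(p,b)$ of any leaf chain must be childless, which forces $b=-1$ (and distinct points give distinct such leaves); hence exactly $n$ leaves. Second, every internal node has at least two children: an internal node is a chain with bottom $(p,b)$ for some $b \ge 0$, and the merge was not performed at the bottom, which --- since the self-child $(p,b-1)$ is present by $\mathcal{C}_{b-1}\supseteq\mathcal{C}_b$ --- means $(p,b-1)$ was not the unique child of $(p,b)$, i.e.\ $(p,b)$ (and hence the coalesced node) has $\ge 2$ children. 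A rooted tree with $n$ leaves in which every internal node has at least two children has at most $n-1$ internal nodes: summing the number of children over internal nodes counts every non-root node once, so it equals (total nodes) $-1$, while it is also at least twice the number of internal nodes; combined with (total nodes) $=$ (internal) $+\, n$ this gives (internal) $\le n-1$. Hence the coalesced tree has at most $2n-1$ nodes, its edges number at most $2n-2$, and since space is $O(1)$ per node plus $O(1)$ per edge, $\mathcal{T}$ takes $O(n)$ space, independent of $\Delta$ and $T$.

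The only real work is in the bookkeeping of the coalesced representation: one must check that discarding the self-chains loses no information (every merged level is recoverable from the stored range $[b,a]$ together with the children's ranges), and one must be comfortable that the ``$\ge 2$ children'' dichotomy is exactly what the nesting $\mathcal{C}_i\subseteq\mathcal{C}_{i-1}$ buys us --- it guarantees the self-child is always present, which simultaneously pins the leaf count at $n$ and forces every internal chain to terminate at a genuine branching. Once those are in place the rest is the elementary tree-counting above, and nothing in it refers to $\Delta$ or $T$.
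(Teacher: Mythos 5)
Your proof is correct and takes the same route as the paper: both use the explicit (coalesced) representation from Beygelzimer et al.\ in which self-child-only chains are merged, so that every surviving node has either a non-self parent or a non-self child. The paper states this and cites Theorem 1 of \cite{beygelzimer2006cover} without elaboration, whereas you carry out the counting in full --- leaves in bijection with $X$, internal nodes having $\ge 2$ children, hence at most $2n-1$ nodes total --- which is exactly the argument the citation is standing in for.
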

\begin{proof}
We use the \emph{explicit} representation of $\mathcal{T}$ (as done in \cite{beygelzimer2006cover}), where we coalesce all nodes
in which the only child is a self-child. Thus, every node either has a parent other than the self-parent or a child other than the self-child. This gives an $O(n)$ space bound, independent of all other parameters.
\end{proof}

We note that it is possible to construct the cover tree data structure of Algorithm \ref{alg:cover_tree_create} in time $2^{O(\lambda_d)} n \log n$, but it is not important to our discussion \cite{beygelzimer2006cover}.

Now we describe the query procedure. Here, we can query with a metric $D$ that is possibly different than the metric $d$ used to create $\mathcal{T}$ in Algorithm \ref{alg:cover_tree_create}.

\begin{algorithm}[H]
\caption{\label{alg:cover_tree_search}Cover Tree Search}
\begin{algorithmic}[1]
\State \textbf{Input:} Cover tree $\mathcal{T}$ associated with point set $X$ , query point $q$, metric $D$, accuracy $\eps \in (0, 1)$.
\State \textbf{Output:} A point $p \in X$
\Procedure{Cover-Tree-Search}{}
\State $t \gets$ number of levels of $\mathcal{T}$
\State  $Q_t \gets \mathcal{C}_t$ \Comment{We use the covers that define $\mathcal{T}$}
\State $i\gets t$
\While{$i \ne -1$}
\State $Q=\{p \in \mathcal{C}_{i-1} : p\text{ has a parent in }Q_i\}$
\State $Q_{i-1} = \{ p \in Q : D(q,p) \le D(q, Q) + 2^i \}$
\If{$D(q, Q_{i-1}) \ge 2^i(1+1/\eps)$ }
\State Exit the while loop.
\EndIf
\State $i \gets i-1$
\EndWhile
\State \textbf{Return} point $p \in Q_{i-1}$ that is closest to $q$ under $D$
\EndProcedure
\end{algorithmic}
\end{algorithm}

\subsubsection{The main theorem}
We construct a cover tree $\mathcal{T}$ using metric $d$ and $T$ from Equation \ref{eq:relationship_metric} in Algorithm \ref{alg:cover_tree_create}. Upon a query $q$, we search for an approximate nearest neighbor in $\mathcal{T}$ in Algorithm \ref{alg:cover_tree_search}, using metric $D$ instead. Our main theorem is the following. 

\begin{theorem}\label{thm:ct_main}
Let $Q_{\mathtt{Cover Tree}}(\eps, \Delta_d, \lambda_d) = 2^{O(\lambda_d)} \log(\Delta_d) + (1/\eps)^{O(\lambda_d)}$ denote the query complexity of the standard cover tree datastructure, where we set $T = 1$ in Algorithm \ref{alg:cover_tree_create} and build and search using the same metric $d$. Now consider two metrics $d$ and $D$ satisfying Equation \ref{eq:relationship_metric}. Suppose we build a cover tree $\mathcal{T}$ with metric $d$ by setting $T = C$ in Algorithm \ref{alg:cover_tree_create}, but search using metric $D$ in Algorithm \ref{alg:cover_tree_search}. Then the following holds:
\begin{enumerate}
    \item The space used by $\mathcal{T}$ is $O(n)$.
    \item Running Algorithm \ref{alg:cover_tree_search} using $D$ finds a $1+\eps$ approximate nearest neighbor of $q$ in the dataset $X$ (under metric $D$).
    \item On any query, Algorithm \ref{alg:cover_tree_search} invokes $D$ at most 
    \[C^{O(\lambda_d)}\log(\Delta_d) + (C/\eps)^{O(\lambda_d)} = \tilde{O}(Q_{\mathtt{Cover Tree}}(\Omega(\eps/C), \Delta_d, \lambda_d)).\] times.
    \end{enumerate}
\end{theorem}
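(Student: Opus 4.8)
The plan is to mirror the DiskANN argument: show that the cover tree built for $d$ with parameter $T = C$ functions, for the query procedure, exactly like a cover tree built for $D$ with parameter $T = 1$ (up to constants), and then invoke the known correctness and complexity guarantees of the standard cover tree search. The space bound (part 1) is immediate from Lemma~\ref{lem:space}, which is independent of $T$ and of the metric, so the work is entirely in parts 2 and 3. First I would record the scale-translation lemma: if $\mathcal{C}_i$ is a $2^i/C$-cover of $\mathcal{C}_{i-1}$ under $d$, then by Equation~\ref{eq:relationship_metric} every point of $\mathcal{C}_{i-1}$ is within $d$-distance $2^i/C$, hence within $D$-distance $2^i$, of its parent in $\mathcal{C}_i$; and conversely any two distinct points of $\mathcal{C}_i$ are at $d$-distance $\ge 2^i/C$, hence at $D$-distance $\ge 2^i/C$. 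So $\mathcal{C}_i$ is simultaneously a "$D$-covering at scale $2^i$" and "$D$-packed at scale $2^i/C$" — precisely the pair of properties the cover-tree invariant (with $T=1$) would give at level $i$, except the separation is weaker by a factor $C$. This weaker separation is exactly what inflates the packing/doubling counts by $C^{O(\lambda_d)}$ and is the price we pay.

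Second, with this lemma in hand I would re-run the correctness proof of Algorithm~\ref{alg:cover_tree_search} verbatim, but bookkeeping distances in $D$. The key structural fact needed is: once we are at level $i$ with candidate set $Q_i$, the set $Q_{i-1}$ retained by the pruning rule $D(q,p) \le D(q,Q) + 2^i$ still contains an ancestor of the true nearest neighbor $p^*$. This follows because descending from level $i-1$ to level $-1$ can only move a point by a total $D$-distance of $\sum_{j \le i-1} 2^j \le 2^i$ (the covering property above, summed over the geometric series of scales), so the level-$(i-1)$ ancestor of $p^*$ is within $2^i$ of $p^*$ in $D$; combined with $p^*$ being $q$'s nearest neighbor this keeps it inside the $D(q,Q)+2^i$ band. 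The termination test $D(q,Q_{i-1}) \ge 2^i(1+1/\eps)$ then certifies, by the same $\sum 2^j \le 2^i$ slack argument, that the closest retained point is already a $(1+\eps)$-approximate nearest neighbor under $D$ — note the $\eps$ here is the one the user passes to Algorithm~\ref{alg:cover_tree_search}, and since the metric arithmetic is now done purely in $D$ we get a genuine $(1+\eps)$ guarantee for $D$, not a $C$-degraded one. This is the step that genuinely uses that we search with $D$ and not $d$.

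Third, for the query complexity I would split the cost the way the statement's bound $C^{O(\lambda_d)}\log(\Delta_d) + (C/\eps)^{O(\lambda_d)}$ splits. The "descent" phase — levels from $t$ down to the level where the radius $2^i$ first drops below roughly $\eps \cdot D(q,p^*)$ — contributes one term: at each of the $O(\log(\Delta_d))$ levels (the aspect ratio in $d$ and in $D$ differ only by a factor $C$, absorbed into the $\log$) the candidate set $Q_i$ and its children $Q$ have size bounded by a doubling-dimension packing argument, which because of the $C$-weakened separation gives $C^{O(\lambda_d)}$ points per level, hence $C^{O(\lambda_d)}\log(\Delta_d)$ calls to $D$ total in this phase. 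The "refinement" phase — the remaining $O(\log(1/\eps))$ levels until the termination test fires — contributes the second term: here all retained points lie in a ball of $D$-radius $O(\eps \cdot D(q,p^*))$ around $p^*$ yet are separated at scale $\Omega((\eps/C) D(q,p^*))$, so a packing bound gives $(C/\eps)^{O(\lambda_d)}$ of them, and summing the geometric series of level sizes gives the same order. Adding the two phases yields the claimed bound, and rewriting $C^{O(\lambda_d)}\log(\Delta_d) + (C/\eps)^{O(\lambda_d)}$ as $\tilde O(Q_{\mathtt{Cover Tree}}(\Omega(\eps/C),\Delta_d,\lambda_d))$ is just matching it against the definition of $Q_{\mathtt{Cover Tree}}$ (the $\tilde O$ swallowing the $C^{O(\lambda_d)}$ versus $2^{O(\lambda_d)}$ discrepancy as a constant-in-$\eps$ overhead, consistent with the paper's convention that $\tilde O$ hides aspect-ratio logs and we treat $C$ as a constant).

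The main obstacle I anticipate is the refinement-phase analysis: one has to argue carefully that the termination rule $D(q,Q_{i-1}) \ge 2^i(1+1/\eps)$ fires at a level that is only $O(\log(1/\eps))$ below the level where $2^i \approx D(q,p^*)$, and that at every such level the retained candidate set genuinely has the small doubling-bounded size $(C/\eps)^{O(\lambda_d)}$ rather than silently blowing up — this requires coupling the $D$-ball containing the candidates (radius $\approx 2^i/\eps$, which is $\approx D(q,p^*)$, not shrinking) with the $d$-separation scale $2^i/C$ of the level, and the two scales differ by the right $\Theta(C/\eps)$ factor only because of how the stopping threshold is calibrated. Getting the constants in Definition~\ref{def:doubling_dim} to chain correctly through the $d \to D$ translation and the geometric sums is the fiddly part; everything else is a direct transcription of the Beygelzimer–Kakade–Langford analysis with $d$ replaced by $D$ and an extra factor $C$ tracked through the packing estimates.
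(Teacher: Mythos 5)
Your overall strategy matches the paper's: the ``scale-translation'' observation is the paper's Lemma~\ref{lem:descend} together with the fact that $\mathcal{C}_i$ is $d$-separated at scale $2^i/C$; the correctness argument is Theorem~\ref{thm:ct_acc} essentially verbatim (ancestor of $p^*$ survives each pruning step, then combine the stopping rule with the geometric-sum bound to extract a $(1+\eps)$ factor under $D$); and the two-phase split of the query complexity corresponds to the paper's Bound~1 / Bound~2 in Theorem~\ref{thm:ct_query}. The space bound via Lemma~\ref{lem:space} is of course immediate. So this is the same route, not a different one.

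There is, however, an arithmetic slip in your refinement-phase packing estimate that you should fix, since as written the step does not yield the bound you claim. You state that the retained candidates lie in a $D$-ball of radius $O(\eps\, D(q,p^*))$ around $p^*$ with pairwise separation $\Omega((\eps/C)\, D(q,p^*))$; the ratio of these two scales is $\Theta(C)$, which gives a packing bound of $C^{O(\lambda_d)}$, \emph{not} the $(C/\eps)^{O(\lambda_d)}$ you then write down. The correct accounting, which is what the paper does, is: once $D(q,p^*) \ge \Omega(2^i)$ but the loop has not yet terminated, the line-10 condition forces $D(q, Q_{i-1}) \le O(2^i/\eps)$, so the $D$-diameter of the candidate set is $O(2^i/\eps)$, whereas the $d$-separation inherited from the cover construction (with $T=C$) is $\Omega(2^i/C)$; the ratio is $C/\eps$, giving $(C/\eps)^{O(\lambda_d)}$ per level, and the number of such levels is $O(\log(1/\eps))$ because the loop continuing forces $2^i \ge \Omega(\eps\, D(q,p^*))$ while the case assumption gives $2^i \le O(D(q,p^*))$. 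Relatedly, the phase boundary should sit at $2^i \approx D(q,p^*)$, not at $2^i \approx \eps\, D(q,p^*)$: the latter is where the algorithm terminates, not where the packing regime changes. Your descent phase as you delimit it therefore silently includes levels where the per-level count is $(C/\eps)^{O(\lambda_d)}$ rather than $C^{O(\lambda_d)}$. The final total still matches the theorem, but the per-phase reasoning needs this correction to be sound.
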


Two prove Theorem \ref{thm:ct_main}, we need to: (a) argue correctness and (b) bound the number of times Algorithm \ref{alg:cover_tree_search} calls its input metric $D$. While both follow from similar analysis as in \cite{beygelzimer2006cover}, it is not in a black-box manner since the metric we used to search $\mathcal{T}$ in Algorithm \ref{alg:cover_tree_search} is different than the metric used to build $\mathcal{T}$ in Algorithm \ref{alg:cover_tree_create}.

We begin with a helpful lemma.

\begin{lemma}\label{lem:descend}
  For any $p \in \mathcal{C}_{i-1}$, the distance between $p$ and any of its descendants in $\mathcal{T}$ is bounded by $2^i$ \emph{under} $D$.
\end{lemma}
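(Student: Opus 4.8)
The plan is to prove Lemma~\ref{lem:descend} by induction on the level index, working downward from the root toward the leaves, exploiting the geometric decay of the cover radii built by Algorithm~\ref{alg:cover_tree_create} together with the metric relation~\eqref{eq:relationship_metric}. Concretely, fix $p \in \mathcal{C}_{i-1}$ and let $q$ be any descendant of $p$ in $\mathcal{T}$; say $q \in \mathcal{C}_{j-1}$ for some $j \le i$. I would bound $D(p,q)$ by summing, along the unique root-to-leaf path in $\mathcal{T}$ connecting $p$ to $q$, the $D$-lengths of each parent-child edge. By construction (line~12 of Algorithm~\ref{alg:cover_tree_create}), an edge connecting a node at level $k$ to its parent at level $k$ has $d$-length at most $2^k/T = 2^k/C$; applying the upper bound $D(x,y)\le C\,d(x,y)$ from~\eqref{eq:relationship_metric} turns this into a $D$-length of at most $2^k$.

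The key estimate is then the geometric sum: the path from $p$ (at level $i-1$) down to $q$ passes through edges at levels $i-1, i-2, \dots$, contributing at most $\sum_{k \le i-1} 2^k = 2^i$ under $D$ by the triangle inequality for $D$. So first I would set up the telescoping/triangle-inequality argument carefully, being explicit that "descendant" means we follow child pointers, that self-children (coalesced in the explicit representation, per Lemma~\ref{lem:space}) contribute zero length, and that the relevant levels strictly decrease so the sum is dominated by $\sum_{k=0}^{\infty} 2^{i-1-k}\cdot\text{(something)}$-style bound; I'd phrase it so the constant works out to exactly $2^i$ rather than $2^{i+1}$, matching the statement. An alternative, cleaner packaging is a direct induction: the base case is a node with no non-self descendants (distance $0 \le 2^i$); for the inductive step, if $p\in\mathcal{C}_{i-1}$ has a child $p'$ at level $i-1$ (so in $\mathcal{C}_{i-2}$) with $d(p,p')\le 2^{i-1}/C$, hence $D(p,p')\le 2^{i-1}$, and by induction every descendant $q$ of $p'$ satisfies $D(p',q)\le 2^{i-1}$, then $D(p,q)\le D(p,p')+D(p',q)\le 2^{i-1}+2^{i-1}=2^i$.

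The main obstacle, such as it is, is bookkeeping with the indexing conventions of Algorithm~\ref{alg:cover_tree_create}: the covers are indexed so that $\mathcal{C}_i$ is a $2^i/T$-cover of $\mathcal{C}_{i-1}$, and one must track whether a "node at level $i$" means a point of $\mathcal{C}_i$ living in the tree layer $i$, and correspondingly whether its parent edge has length governed by $2^i/T$ or $2^{i-1}/T$ — an off-by-one here changes the constant in the bound. I would resolve this by reading line~12 literally (each $p \in \mathcal{C}_{i-1}\setminus\mathcal{C}_i$ connects to some $p'\in\mathcal{C}_i$ with $d(p,p')\le 2^i/T$) and anchoring the induction on that exact inequality, so that a point appearing for the last time at level $i-1$ (i.e.\ in $\mathcal{C}_{i-1}\setminus\mathcal{C}_i$) has $D$-distance at most $2^i$ to its parent at level $i$; then for $p\in\mathcal{C}_{i-1}$ all its descendants sit at levels $\le i-1$, and the partial geometric sum $2^{i-1}+2^{i-2}+\cdots$ is at most $2^i$. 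No deep idea is required; the care is entirely in matching constants and making the "descendant" quantifier precise. Once that is pinned down, Lemma~\ref{lem:descend} will feed directly into the correctness and query-complexity arguments for Theorem~\ref{thm:ct_main} in the same way the analogous bound is used in \cite{beygelzimer2006cover}.
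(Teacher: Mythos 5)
Your proof is correct and takes essentially the same approach as the paper: the paper cites Theorem~2 of \cite{beygelzimer2006cover} for the telescoping bound $d(p,p')\le\sum_{j\le i-1}2^j/T=2^i/T$ and then scales by $C$ (using $T=C$), whereas you carry out that telescoping/induction explicitly rather than by citation. The two differ only in whether the geometric-sum step is spelled out or delegated to the reference, so your argument is a correct, self-contained rendering of the same idea.
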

\begin{proof}
    The proof of the lemma follows from Theorem 2 in \cite{beygelzimer2006cover}. There, it is shown that for any $p \in \mathcal{C}_{i-1}$ the distance between $p$ and any descendant $p'$  is bounded by $d(p, p' ) \le \sum_{j = -\infty}^{i-1} 2^j/T = 2^i/T$, implying the lemma after we scale by $C$ due to Equation \ref{eq:relationship_metric} (note we set $T = C$ in the construction of $\mathcal{T}$ in Theorem \ref{thm:ct_main}).
\end{proof}

We now argue accuracy.

\begin{theorem}\label{thm:ct_acc}
Algorithm \ref{alg:cover_tree_search} returns a $1+\eps$-approximate nearest neighbor to query $q$ under $D$.
\end{theorem}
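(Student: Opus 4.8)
The plan is to mirror the correctness argument of \cite{beygelzimer2006cover} for the cover tree query procedure, but to carefully track the fact that the search in Algorithm~\ref{alg:cover_tree_search} uses the metric $D$ while $\mathcal{T}$ was built with $d$ (and with the rescaling $T = C$). Let $q$ be the query and $p^\ast = \arg\min_{p \in X} D(q, p)$ its true nearest neighbor under $D$, and write $D^\ast = D(q, p^\ast)$. The core invariant I would maintain is: at every level $i$ in the while loop, the set $Q_i$ contains an ancestor of $p^\ast$ — i.e., some node $a_i \in \mathcal{C}_i$ that is an ancestor of $p^\ast$ in $\mathcal{T}$. This is the analogue of the ``the path to the optimum is never pruned'' claim in the original proof.

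To establish the invariant inductively, I would use Lemma~\ref{lem:descend}: if $a_i \in \mathcal{C}_i$ is an ancestor of $p^\ast$, then $D(q, a_i) \le D(q, p^\ast) + D(p^\ast, a_i) \le D^\ast + 2^{i+1}$ (the descendant bound at level $i$ gives $2^{i+1}$, since $a_i \in \mathcal{C}_i \subseteq \mathcal{C}_{(i+1)-1}$ and $p^\ast$ is a descendant). When we expand to $Q = \{p \in \mathcal{C}_{i-1} : p \text{ has a parent in } Q_i\}$, the child $a_{i-1} \in \mathcal{C}_{i-1}$ of $a_i$ that is still an ancestor of $p^\ast$ lies in $Q$; I must then check it survives the filtering step $D(q,p) \le D(q,Q) + 2^i$. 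This holds because $D(q, a_{i-1}) \le D(q, p^\ast) + D(p^\ast, a_{i-1}) \le D^\ast + 2^i$ by Lemma~\ref{lem:descend} at level $i-1$, while $D(q, Q) \ge D(q, p^\ast) = D^\ast$ trivially, so $D(q, a_{i-1}) \le D(q, Q) + 2^i$ as required. Hence $a_{i-1} \in Q_{i-1}$, completing the induction.

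Next I would handle the two ways the loop can terminate. If the loop runs all the way to $i = -1$, then $Q_{-1} \subseteq \mathcal{C}_{-1} = X$ contains $p^\ast$ itself (the ancestor at the leaf level is the point), so the returned closest point under $D$ is exactly $p^\ast$ and we are done. If instead we exit early because $D(q, Q_{i-1}) \ge 2^i(1 + 1/\eps)$ at some level, I would argue as follows. Let $p$ be the point in $Q_{i-1}$ closest to $q$ under $D$, so $D(q,p) = D(q, Q_{i-1}) \ge 2^i(1+1/\eps)$. On the other hand, since $Q_{i-1}$ still contains an ancestor $a_{i-1}$ of $p^\ast$, we have $D^\ast = D(q, p^\ast) \ge D(q, a_{i-1}) - D(a_{i-1}, p^\ast) \ge D(q, Q_{i-1}) - 2^i \ge 2^i / \eps$, i.e.\ $2^i \le \eps D^\ast$. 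Finally, the output point $p$ satisfies $D(q, p) = D(q, Q_{i-1}) \le D(q, a_{i-1}) \le D^\ast + 2^i \le (1+\eps) D^\ast$ using the ancestor bound and $2^i \le \eps D^\ast$. Either way the returned point is a $(1+\eps)$-approximate nearest neighbor under $D$.

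The main obstacle I anticipate is getting the constants in the descendant bound exactly right: Lemma~\ref{lem:descend} is stated for $p \in \mathcal{C}_{i-1}$ with bound $2^i$ under $D$, and I need to invoke it at the correct level for both the parent $a_i$ (viewed as an element of $\mathcal{C}_{(i+1)-1}$, giving bound $2^{i+1}$) and the child $a_{i-1}$ (giving bound $2^i$), so I would double-check the index bookkeeping matches the loop variable conventions in Algorithm~\ref{alg:cover_tree_search}. A secondary subtlety is justifying that each $Q_i$ is nonempty and that the filtering step is well-defined ($D(q,Q)$ is a minimum over a nonempty set) — this follows from the invariant itself, so it is not circular as long as the base case $Q_t = \mathcal{C}_t$ trivially contains an ancestor of $p^\ast$.
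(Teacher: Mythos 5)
Your proposal is correct and follows essentially the same argument as the paper's proof: maintain the invariant that $Q_i$ always retains an ancestor of $p^\ast$ (using Lemma~\ref{lem:descend} and the fact that $D(q,Q)\ge D(q,p^\ast)$ to survive the filtering step), and then handle the two termination cases identically, deriving $2^i \le \eps D(q,p^\ast)$ from the exit condition. You spell out a couple of steps the paper leaves implicit (e.g., $D(q,Q)\ge D^\ast$ and the precise level at which Lemma~\ref{lem:descend} is invoked), but the substance and structure match.
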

\begin{proof}
Let $p^*$ be the true nearest neighbor of query $q$. Consider the leaf to root path starting from $p^*$. We first claim that if $Q_i$ contains an ancestor of $p^*$, then $Q_{i-1}$ also contains an ancestor $q_{i-1}$ of $p^*$. To show this, note that $D(p^*, q_{i-1}) \le 2^i$ by Lemma \ref{lem:descend}, so we always have 
\[D(q, q_{i-1}) \le D(q, p^*) + D(p^*, q_{i-1}) \le D(q, Q) + 2^i,\] meaning $q_{i-1}$ is included in $Q_{i-1}$.

When we terminate, either we end on a single node, in which case we return $p^*$ exactly (from the above argument), or when $D(q, Q_{i-1}) \ge 2^{i}(1 + 1/\eps)$. In this latter case, we additionally know that 
\[D(q, Q_{i-1}) \le D(q, p^*) + D(p^*, Q_{i-1}) \le D(q, p^*) + 2^i\] since an ancestor of $p^*$ is contained in $Q_{i-1}$ (namely $q_{i-1}$ from above). But the exit condition implies 
\[ 2^i(1 + 1/\eps) \le D(q, p^*) + 2^i \implies 2^i \le \eps D(q, p^*),  \]
which means 
\[D(q, Q_{i-1}) \le  D(q, p^*) + 2^i \le D(q, p^*) + \eps D(q, p^*) = (1+\eps) D(q, p^*),\]
as desired.
\end{proof}

Finally, we bound the query complexity. The following follows from the arguments in \cite{beygelzimer2006cover}.
\begin{theorem}\label{thm:ct_query}
The number of calls to $D$ in Algorithm \ref{alg:cover_tree_search} is bounded by $C^{O( \lambda_{d})} \cdot \log(\Delta_d C) + (C/\eps)^{O(\lambda_{d})}$.
\end{theorem}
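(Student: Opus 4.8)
The plan is to follow the level-by-level complexity analysis of \cite{beygelzimer2006cover}, but carefully tracking the blow-up factors that arise because the cover tree $\mathcal{T}$ was built with radii $2^i/C$ under $d$ while the search in Algorithm \ref{alg:cover_tree_search} measures distances under $D$. The total work is the sum, over the levels $i$ visited by the while loop, of $|Q|$, the set of children of the surviving candidates $Q_i$; so I would bound $|Q|$ at each level and then bound the number of levels visited.

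First I would bound $|Q_i|$ itself. By construction, the points of $\mathcal{C}_{i-1}$ are pairwise at distance $\ge 2^i/C$ under $d$ (cover separation property), hence at distance $\ge 2^i/C$ under $D$ as well by Equation \ref{eq:relationship_metric}. On the other hand, every $p \in Q_{i-1}$ satisfies $D(q,p) \le D(q,Q) + 2^i$, and since the loop has not yet exited, $D(q,Q_{i-1}) < 2^i(1+1/\eps)$, so all points of $Q_{i-1}$ lie in a $D$-ball around $q$ of radius $O(2^i/\eps)$. A doubling-dimension packing argument — points $2^i/C$-separated inside a ball of radius $O(2^i/\eps)$ — gives $|Q_{i-1}| \le (C/\eps)^{O(\lambda_d)}$, using that doubling dimension under $d$ transfers to $D$ up to constant factors (the radius ratio is $O(C/\eps)$, costing $O(\lambda_d \log(C/\eps))$ doubling steps). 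Then $|Q|$, the set of children in $\mathcal{C}_{i-1}$ of nodes in $Q_i$, is at most $|Q_i|$ times the maximum number of children of a single cover-tree node; the latter is $2^{O(\lambda_d)}$ by the standard width bound for cover trees (a node's children are $2^{i-1}/C$-separated and contained in a ball of radius $2^i/C$ around the node, all under $d$). So $|Q| \le (C/\eps)^{O(\lambda_d)}$ at every visited level.

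Next I would split the levels into a ``top'' regime and a ``bottom'' regime, exactly as in \cite{beygelzimer2006cover}. For the top $O(\log(\Delta_d C))$ levels — where $2^i$ is comparable to the diameter — the bound $|Q_i| \le 2^{O(\lambda_d)}$ holds even without the $1/\eps$ factor, because at those scales the entire relevant candidate set fits in a ball of radius $O(2^i)$ with $2^i/C$-separated points, giving a packing bound of $C^{O(\lambda_d)}$; summing over $O(\log(\Delta_d C))$ such levels gives the $C^{O(\lambda_d)}\log(\Delta_d C)$ term. (The $\log(\Delta_d C)$ rather than $\log \Delta_d$ comes from the fact that the construction rescales so all $d$-distances lie in $(1,\Delta_d]$, hence the number of levels of $\mathcal{T}$ is $t = O(\log(\Delta_d C))$ because $T = C$.) Once $2^i(1+1/\eps)$ drops below $D(q,p^*)$ the exit condition can trigger; the number of additional levels the loop runs past that point is $O(\log(1/\eps))$, over which each level contributes $|Q| \le (C/\eps)^{O(\lambda_d)}$, absorbed into the $(C/\eps)^{O(\lambda_d)}$ term. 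Combining, the total number of $D$-evaluations is $C^{O(\lambda_d)}\log(\Delta_d C) + (C/\eps)^{O(\lambda_d)}$, as claimed, and Theorem \ref{thm:ct_main} then follows since this is $\tilde O(Q_{\mathtt{Cover Tree}}(\Omega(\eps/C), \Delta_d, \lambda_d))$.

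The main obstacle I expect is making the packing arguments fully rigorous across the change of metric: one must consistently use that $d$-separation implies $D$-separation (easy direction of Equation \ref{eq:relationship_metric}) for the lower bound on pairwise distances, while using $D$-proximity to $q$ together with the $C$-upper bound to control the enclosing ball, and then invoke the doubling dimension — which is stated for $d$ — after checking it degrades by only an $O(\lambda_d \log C)$ additive amount when passing to $D$ at a fixed scale. Keeping the scale $2^i$, the separation $2^i/C$, and the covering radius $O(2^i/\eps)$ aligned through these translations, so that the exponent of the packing bound is exactly $O(\lambda_d)$ with the $C$ and $1/\eps$ dependence landing in the base, is the delicate bookkeeping; everything else is a routine adaptation of the single-metric cover tree analysis.
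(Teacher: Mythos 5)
Your proposal follows essentially the same two-regime, level-by-level argument as the paper: bound $|Q_{i-1}|$ by packing $\Omega(2^i/C)$-separated cover points inside an $O(2^i)$-ball when $D(q,p^*) = O(2^i)$ (giving $C^{O(\lambda_d)}$ per level over $O(\log(\Delta_d C))$ levels), and inside an $O(2^i/\eps)$-ball otherwise (giving $(C/\eps)^{O(\lambda_d)}$ per level over $O(\log(1/\eps))$ levels past the exit threshold), then multiply by the $2^{O(\lambda_d)}$ branching factor. One small simplification: you need not appeal to ``doubling dimension transferring from $d$ to $D$'' at all — since $d \le D$, a $D$-ball of radius $R$ around $q$ is contained in the $d$-ball of radius $R$, so the packing argument can be carried out entirely under $d$ with its native doubling dimension $\lambda_d$, exactly as the paper does.
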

\begin{proof}[Proof Sketch]
The bound follows from \cite{beygelzimer2006cover} but we briefly outline it here. The query complexity is dominated by the size of the sets $Q_{i-1}$ in Line $9$ as the algorithm proceeds. We give two ways to bound $Q_{i-1}$. Before that, note that the points $p$ that make up $Q_{i-1}$ are in a cover (under $d$) by the construction of $\mathcal{T}$, so they are all separated by distance at least $\Omega(2^i/C)$ (under $d$). Let $p^*$ be the closest point to $q$ in $X$.

\begin{itemize}[leftmargin=*]
\item  \textbf{Bound 1}: In the iterations where $D(q, p^*) \le O(2^i)$, we have the diameter of $Q_{i-1}$ under $D$ is at most $O(2^i)$ as well. This is because an ancestor $q_{i-1} \in C_{i-1}$ of $p^*$ is in $Q$ of line $8$ (see proof of Theorem \ref{thm:ct_acc}), meaning $D(q, Q) \le O(2^i)$ due to Lemma \ref{lem:descend}. Thus, any point $p \in Q_{i-1}$ satisfies $D(q, p) \le D(q, Q) + 2^i = O(2^i)$. From Equation \ref{eq:relationship_metric}, it follows that the diameter of $Q_{i-1}$ under $d$ is also at most $O(2^i)$. We know the points in $Q_{i-1}$ are separated by mutual distance at least $\Omega(2^i/C)$ under $d$, implying that  $|Q_{i-1}| \le C^{O( \lambda_{d})}$ in this case by a standard packing argument. This case can occur at most $O(\log(\Delta C))$ times, since that is the number of different levels of $\mathcal{T}$.
    \item \textbf{Bound 2}:  Now consider the case where $D(q,p^*) \ge \Omega(2^i)$. In this case, we have that the points in $Q_{i-1}$ have diameter at most $O(2^{i}/\eps)$ from $q$ (under $D$), due to the condition of line $10$. Thus, the diameter is also bounded by $O(2^{i}/\eps)$  under $d$.  By a standard packing argument, this means that $|Q_{i-1}| \le (C/\eps)^{O(\lambda_d)}$, since again $Q_{i-1}$ are mutually separated by distance at least $\Omega(2^i/C)$ under $d$. However, our goal is to show that the number of iterations where this bound is relevant is at most $O(\log(1/\eps))$. Indeed, we have $D(q,Q_{i-1}) \le O(2^i/\eps)$, meaning $2^i \ge \Omega(\eps D(q, Q_{i-1})) \ge \Omega(\eps D(q, p^*))$  Since we are decrementing the index $i$ and are in the case where $D(q,p^*) \ge \Omega(2^i)$, this can only happen for $O(\log(1/\eps))$ different $i$'s.
\end{itemize}

Combining the two bounds proves the theorem. \end{proof}

The proof of Theorem \ref{thm:ct_main} follows from combining Lemmas \ref{lem:space} and Theorems \ref{thm:ct_acc} and \ref{thm:ct_query}.

\section{Complete experimental results}\label{sec:complete-experiment-results}
We report the empirical results of using different embedding models as distance proxy, using the NSG algorithm, and measuring Recall@10.

\begin{enumerate}
    \item We report the results of using ``bge-micro-v2'' as the distance proxy $d$ and using DiskANN for building the graph index. See Figure~\ref{fig:main-table-recall} for Recall@10 metric plots.
    \item We report the results of using ``gte-small'' as the distance proxy $d$ and using DiskANN for building the graph index. See Figure~\ref{fig:diskann-gte-small-ndcg} for NDCG@10 metric plots and Figure~\ref{fig:diskann-gte-small-recall} for Recall@10 metric plots.
    \item We report the results of using ``bge-base-en-v1,5'' as the distance proxy $d$ and using DiskANN for building the graph index. See Figure~\ref{fig:diskann-bge-base-ndcg} for NDCG@10 metric plots and Figure~\ref{fig:diskann-bge-base-recall} for Recall@10 metric plots.
    \item We report the results of using ``bge-micro-v2" as the distance proxy $d$ and using NSG for building the graph index. See Figures~\ref{fig:nsg-bge-micro-ndcg} for NDCG@10 metric plots and ~\ref{fig:nsg-bge-micro-recall} for Recall@10 metric plots.
\end{enumerate}

We can see that for all the different cheap distance proxies (``bge-micro-v2'' \cite{bgeembedding}, ``gte-small'' \cite{li2023towardsgte}, ``bge-base-en-v1.5'' \cite{bgeembedding}) and both nearest neighbor search algorithms (DiskANN \cite{jayaram2019diskann} and NSG \cite{fu2019fast}), our method has better NDCG and Recall results on most datasets. Moreover, naturally the advantage of our method over Bi-metric (baseline) is larger when there is a large gap between the qualities of the cheap distance proxy $d$ and the ground truth distance metric $D$. This makes sense because as their qualities converge, the cheap proxy alone is enough to retrieve the closest points to a query for the expensive metric $D$.

\begin{figure}[!h]
\centering
\includegraphics[width=0.99\textwidth]{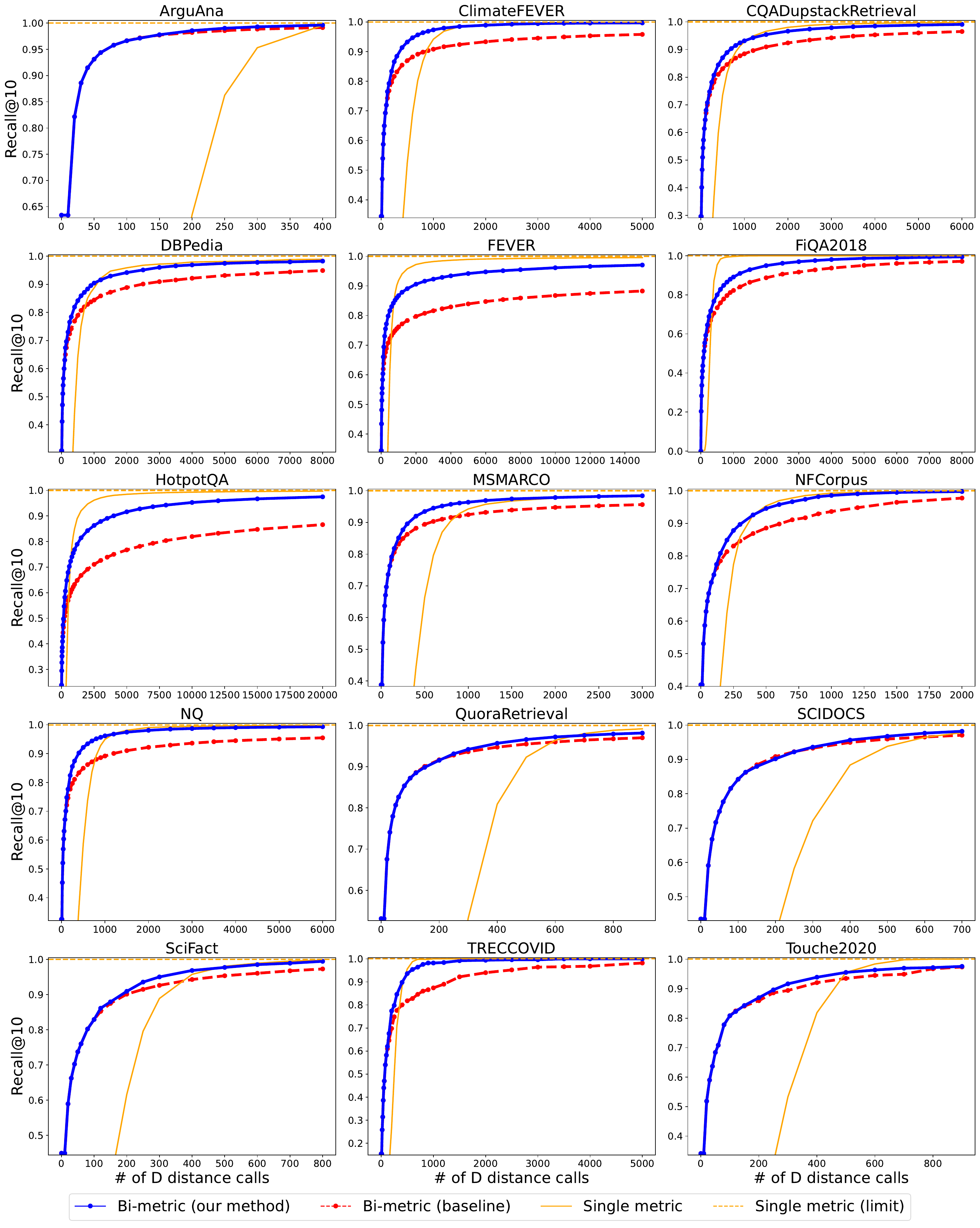}
\caption{Results for 15 MTEB Retrieval datasets. The x-axis is the number of expensive distance function calls. The y-axis is the Recall@10 score. The cheap model is ``bge-micro-v2'', the expensive model is ``SFR-Embedding-Mistral'', and the nearest neighbor search algorithm used is DiskANN.}
\label{fig:main-table-recall}
\end{figure}

\begin{figure}[!h]
\centering
\includegraphics[width=0.99\textwidth]{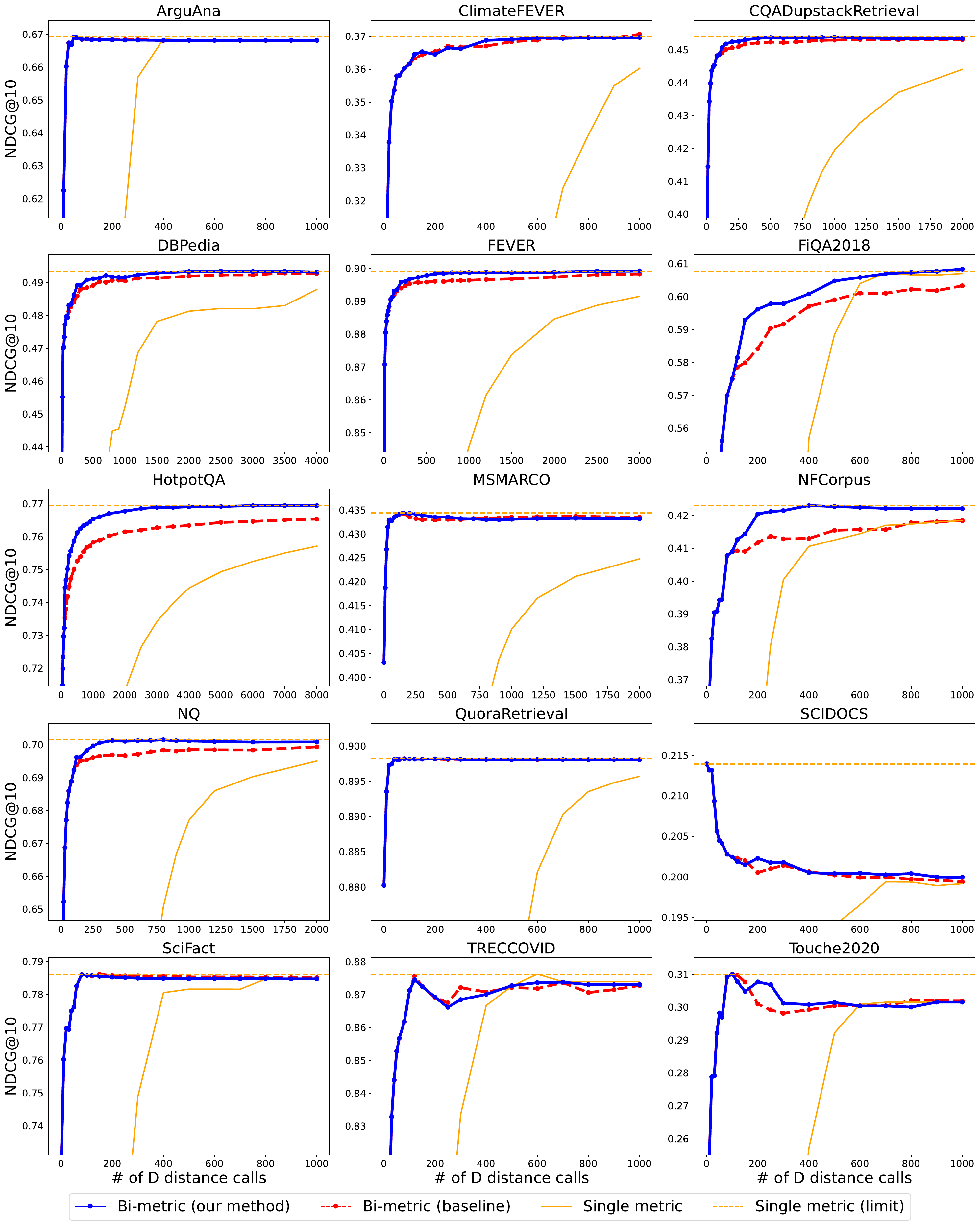}
\caption{Results for 15 MTEB Retrieval datasets. The x-axis is the number of expensive distance function calls. The y-axis is the NDCG@10 score. The cheap model is ``gte-small'', the expensive model is ``SFR-Embedding-Mistral'', and the nearest neighbor search algorithm used is DiskANN.}
\label{fig:diskann-gte-small-ndcg}
\end{figure}

\begin{figure}[!h]
\centering
\includegraphics[width=0.99\textwidth]{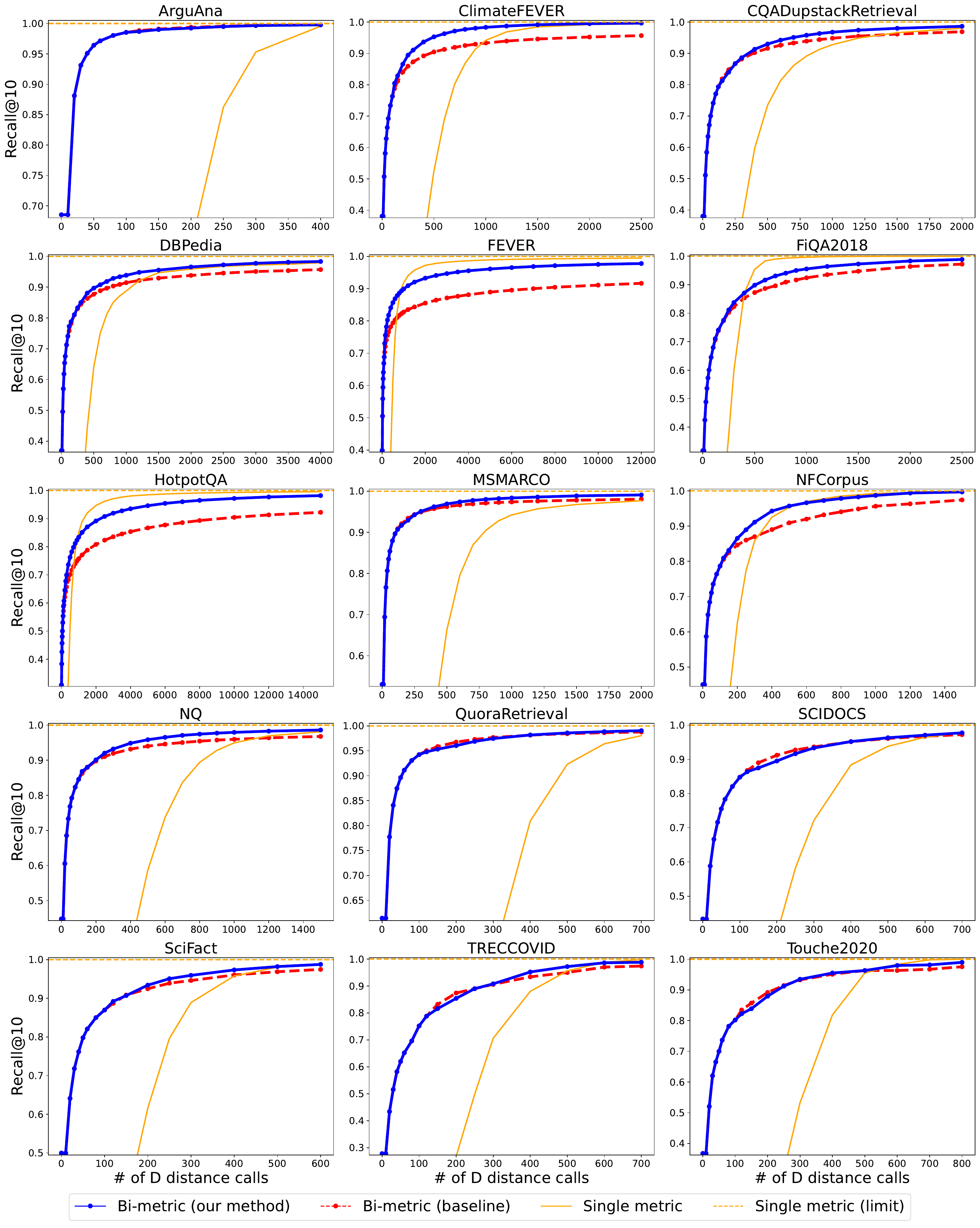}
\caption{Results for 15 MTEB Retrieval datasets. The x-axis is the number of expensive distance function calls. The y-axis is the Recall@10 score. The cheap model is ``gte-small'', the expensive model is ``SFR-Embedding-Mistral'', and the nearest neighbor search algorithm used is DiskANN.}
\label{fig:diskann-gte-small-recall}
\end{figure}

\begin{figure}[!h]
\centering
\includegraphics[width=0.99\textwidth]{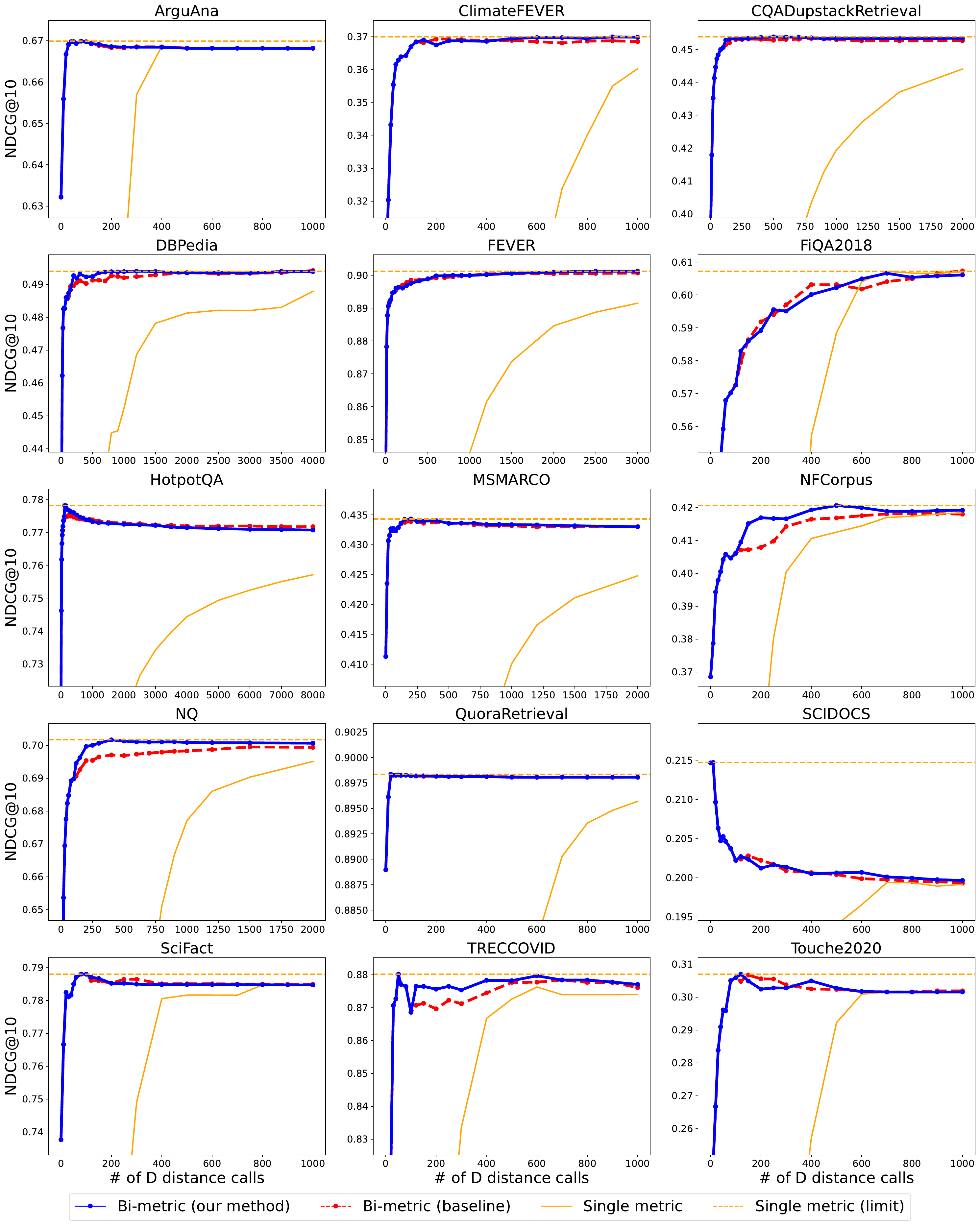}
\caption{Results for 15 MTEB Retrieval datasets. The x-axis is the number of expensive distance function calls. The y-axis is the NDCG@10 score. The cheap model is ``bge-base-en-v1.5'', the expensive model is ``SFR-Embedding-Mistral'', and the nearest neighbor search algorithm used is DiskANN.}
\label{fig:diskann-bge-base-ndcg}
\end{figure}

\begin{figure}[!h]
\centering
\includegraphics[width=0.99\textwidth]{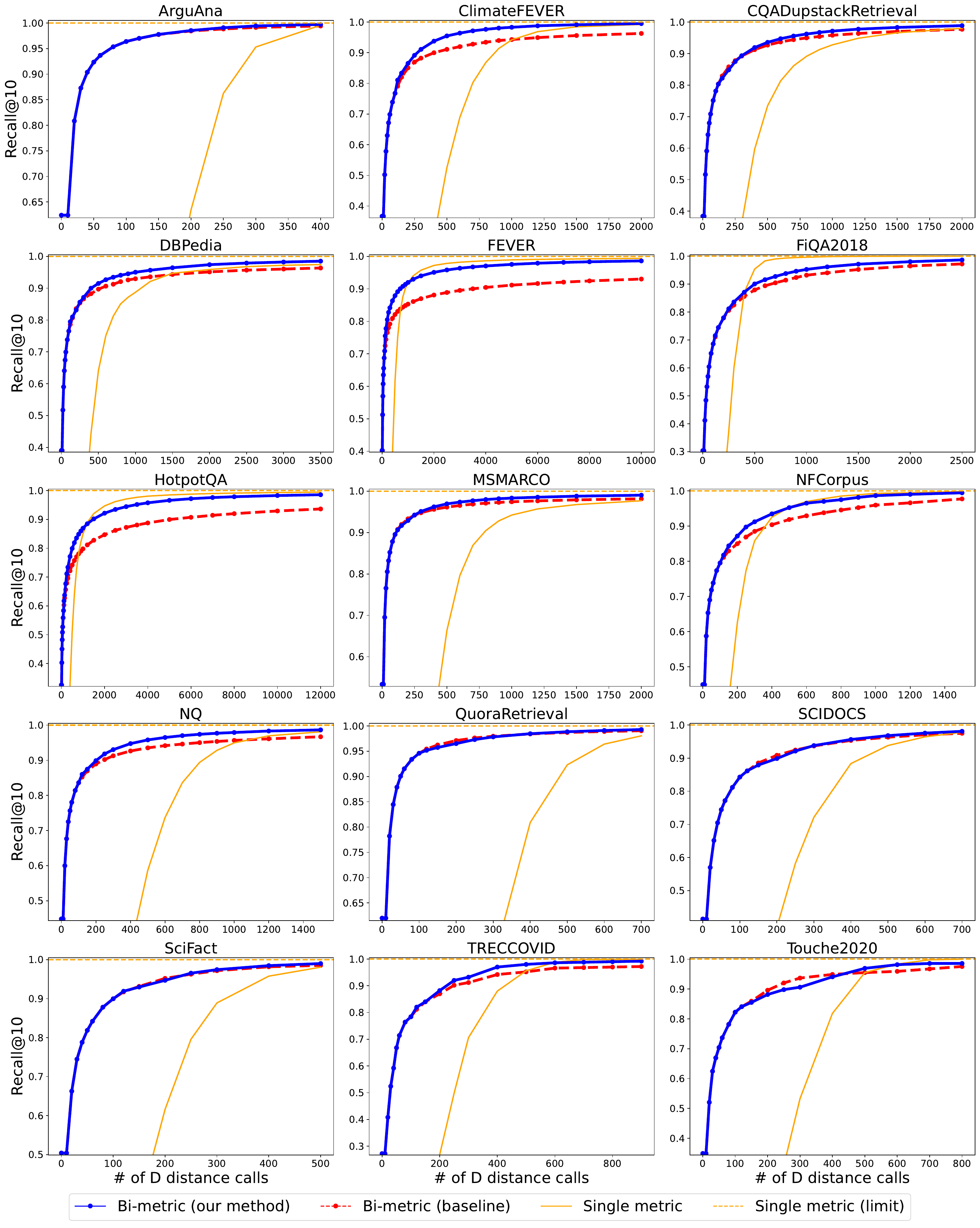}
\caption{Results for 15 MTEB Retrieval datasets. The x-axis is the number of expensive distance function calls. The y-axis is the Recall@10 score. The cheap model is ``bge-base-en-v1.5'', the expensive model is ``SFR-Embedding-Mistral'', and the nearest neighbor search algorithm used is DiskANN.}
\label{fig:diskann-bge-base-recall}
\end{figure}

\begin{figure}[!h]
\centering
\includegraphics[width=0.99\textwidth]{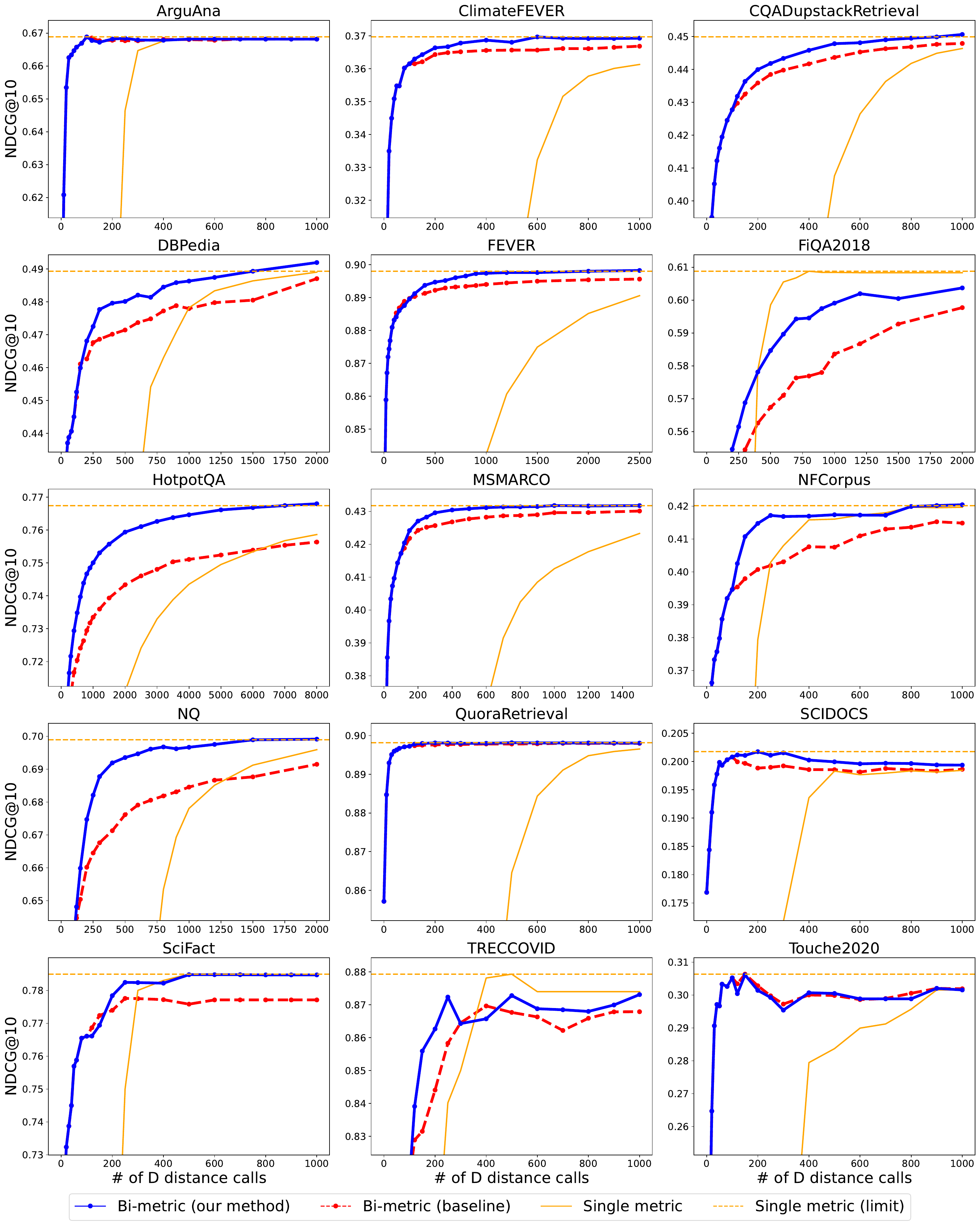}
\caption{Results for 15 MTEB Retrieval datasets. The x-axis is the number of expensive distance function calls. The y-axis is the NDCG@10 score. The cheap model is ``bge-micro-v2'', the expensive model is ``SFR-Embedding-Mistral'', and the nearest neighbor search algorithm used is NSG.}
\label{fig:nsg-bge-micro-ndcg}
\end{figure}

\begin{figure}[!h]
\centering
\includegraphics[width=0.99\textwidth]{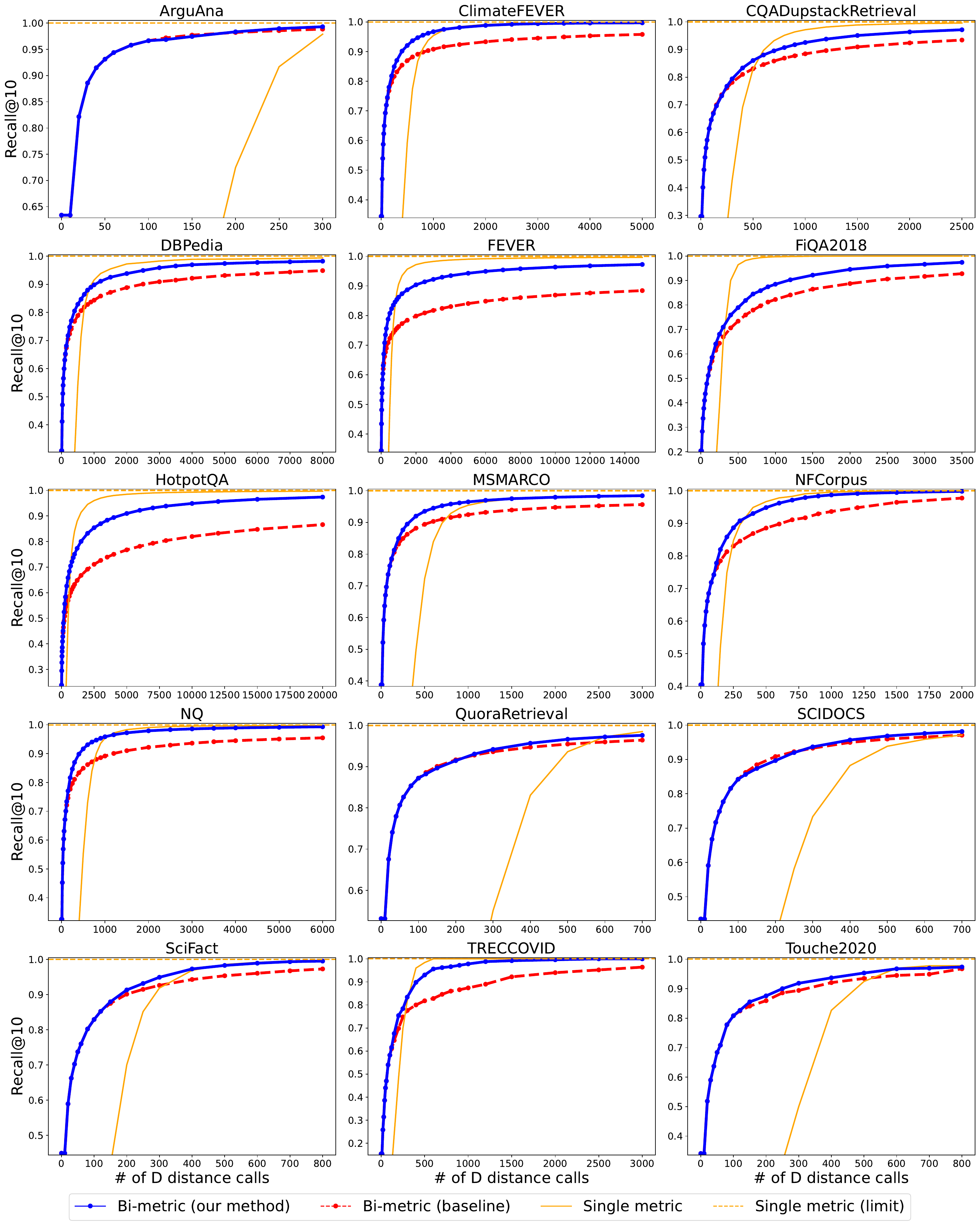}
\caption{Results for 15 MTEB Retrieval datasets. The x-axis is the number of expensive distance function calls. The y-axis is the Recall@10 score. The cheap model is ``bge-micro-v2'', the expensive model is ``SFR-Embedding-Mistral'', and the nearest neighbor search algorithm used is NSG.}
\label{fig:nsg-bge-micro-recall}
\end{figure}

\FloatBarrier

\end{document}